\tikzset{
	commutative diagram/.style 2 args={
		matrix of math nodes, row sep=#1,column sep=#2,
		text height=1.5ex, text depth=0.25ex},
	commutative diagram/.default={1cm}{1cm}
}
\tikzset{    
	skip loop/.style n args={3}{to path={-- ++(0,#1) -| node[pos=0.25,#2] {#3} (\tikztotarget)}},
	cross line/.style={preaction={draw=white, -, line width=6pt}}
}
\newcommand{\singlespacing}{\let\CS=\@currsize\renewcommand{\baselinestretch}{1}\small\CS}
\newcommand{\doublespacing}{\let\CS=\@currsize\renewcommand{\baselinestretch}{1.75}\small\CS}
\newcommand{\normalspacing}{\let\CS=\@currsize\renewcommand{\baselinestretch}{\BLS}\small\CS}
\newtheorem{thm}{Theorem}[section]
\newtheorem{remark}[thm]{Remark}
\newtheorem{theorem}[thm]{Theorem}
\newtheorem{corollary}[thm]{Corollary}
\newtheorem{lemma}[thm]{Lemma}
\newtheorem{proposition}[thm]{Proposition}
\newtheorem{df}[thm]{Definition}
\newtheorem{definition}[thm]{Definition}
\newtheorem{example}[thm]{Example}
\newcommand{\reals}{\mathbb{R}}
 \def\pushright#1{{
    \parfillskip=0pt            
    \widowpenalty=10000         
    \displaywidowpenalty=10000  
    \finalhyphendemerits=0      
    \leavevmode                 
    \unskip                     
    \nobreak                    
    \hfil                       
    \penalty50                  
    \hskip.2em                  
    \null                       
    \hfill                      
    {#1}                        
    \par}}                      
 \def\qed{\pushright{\rule{2mm}{3mm}}\penalty-700 \smallskip}
\newenvironment{proof}{\begin{trivlist} \item[{\bf ~Proof}.]}%
 {\qed\end{trivlist}}
\newdimen\w@dth
\def\setw@dth#1#2{\setbox\z@\hbox{\scriptsize $#1$}\w@dth=\wd\z@
\setbox\@ne\hbox{\scriptsize $#2$}\ifnum\w@dth<\wd\@ne \w@dth=\wd\@ne \fi
\advance\w@dth by 1.2em}
\def\t@^#1_#2{\allowbreak\def\n@one{#1}\def\n@two{#2}\mathrel
{\setw@dth{#1}{#2}
\mathop{\hbox to \w@dth{\rightarrowfill}}\limits
\ifx\n@one\empty\else ^{\box\z@}\fi
\ifx\n@two\empty\else _{\box\@ne}\fi}}
\def\t@@^#1{\@ifnextchar_ {\t@^{#1}}{\t@^{#1}_{}}}
\def\t@left^#1_#2{\def\n@one{#1}\def\n@two{#2}\mathrel{\setw@dth{#1}{#2}
\mathop{\hbox to \w@dth{\leftarrowfill}}\limits
\ifx\n@one\empty\else ^{\box\z@}\fi
\ifx\n@two\empty\else _{\box\@ne}\fi}}
\def\t@@left^#1{\@ifnextchar_ {\t@left^{#1}}{\t@left^{#1}_{}}}
\def\two@^#1_#2{\def\n@one{#1}\def\n@two{#2}\mathrel{\setw@dth{#1}{#2}
\mathop{\vcenter{\hbox to \w@dth{\rightarrowfill}\kern-1.7ex
                 \hbox to \w@dth{\rightarrowfill}}%
       }\limits
\ifx\n@one\empty\else ^{\box\z@}\fi
\ifx\n@two\empty\else _{\box\@ne}\fi}}
\def\tw@@^#1{\@ifnextchar_ {\two@^{#1}}{\two@^{#1}_{}}}
\def\tofr@^#1_#2{\def\n@one{#1}\def\n@two{#2}\mathrel{\setw@dth{#1}{#2}
\mathop{\vcenter{\hbox to \w@dth{\rightarrowfill}\kern-1.7ex
                 \hbox to \w@dth{\leftarrowfill}}%
       }\limits
\ifx\n@one\empty\else ^{\box\z@}\fi
\ifx\n@two\empty\else _{\box\@ne}\fi}}
\def\t@fr@^#1{\@ifnextchar_ {\tofr@^{#1}}{\tofr@^{#1}_{}}}
\newdimen\W@dth
\def\setW@dth#1#2{\setbox\z@\hbox{$#1$}\W@dth=\wd\z@
\setbox\@ne\hbox{$#2$}\ifnum\W@dth<\wd\@ne \W@dth=\wd\@ne \fi
\advance\W@dth by 1.2em}
\def\T@^#1_#2{\allowbreak\def\N@one{#1}\def\N@two{#2}\mathrel
{\setW@dth{#1}{#2}
\mathop{\hbox to \W@dth{\rightarrowfill}}\limits
\ifx\N@one\empty\else ^{\box\z@}\fi
\ifx\N@two\empty\else _{\box\@ne}\fi}}
\def\T@@^#1{\@ifnextchar_ {\T@^{#1}}{\T@^{#1}_{}}}
\def\T@left^#1_#2{\def\N@one{#1}\def\N@two{#2}\mathrel{\setW@dth{#1}{#2}
\mathop{\hbox to \W@dth{\leftarrowfill}}\limits
\ifx\N@one\empty\else ^{\box\z@}\fi
\ifx\N@two\empty\else _{\box\@ne}\fi}}
\def\T@@left^#1{\@ifnextchar_ {\T@left^{#1}}{\T@left^{#1}_{}}}
\def\Tofr@^#1_#2{\def\N@one{#1}\def\N@two{#2}\mathrel{\setW@dth{#1}{#2}
\mathop{\vcenter{\hbox to \W@dth{\rightarrowfill}\kern-1.7ex
                 \hbox to \W@dth{\leftarrowfill}}%
       }\limits
\ifx\N@one\empty\else ^{\box\z@}\fi
\ifx\N@two\empty\else _{\box\@ne}\fi}}
\def\T@fr@^#1{\@ifnextchar_ {\Tofr@^{#1}}{\Tofr@^{#1}_{}}}
\def\Two@^#1_#2{\def\N@one{#1}\def\N@two{#2}\mathrel{\setW@dth{#1}{#2}
\mathop{\vcenter{\hbox to \W@dth{\rightarrowfill}\kern-1.7ex
                 \hbox to \W@dth{\rightarrowfill}}%
       }\limits
\ifx\N@one\empty\else ^{\box\z@}\fi
\ifx\N@two\empty\else _{\box\@ne}\fi}}
\def\Tw@@^#1{\@ifnextchar_ {\Two@^{#1}}{\Two@^{#1}_{}}}
\def\to{\@ifnextchar^ {\t@@}{\t@@^{}}}
\def\from{\@ifnextchar^ {\t@@left}{\t@@left^{}}}
\def\two{\@ifnextchar^ {\tw@@}{\tw@@^{}}}
\def\tofro{\@ifnextchar^ {\t@fr@}{\t@fr@^{}}}
\def\To{\@ifnextchar^ {\T@@}{\T@@^{}}}
\def\From{\@ifnextchar^ {\T@@left}{\T@@left^{}}}
\def\Two{\@ifnextchar^ {\Tw@@}{\Tw@@^{}}}
\def\Tofro{\@ifnextchar^ {\T@fr@}{\T@fr@^{}}}
\newcommand{\ang}[1]{\langle #1\rangle}
\newcommand{\e}{\epsilon}
\newcommand{\U}{\mathcal{U}}
\newcommand{\B}{\mathcal{B}}
\newcommand{\A}{\mathcal{A}}
\newcommand{\E}{\mathcal{E}}
\newcommand{\SSS}{\mathbb{U}}
\newcommand{\V}{\mathcal{V}}
\newcommand{\MM}{\mathcal{M}}
\newcommand{\Om}{\hat{\Omega}}
\newcommand{\ol}[1]{\overline{#1}}
\newcommand{\prationals}{\mathbb{Q}_+}
\newcommand{\preals}{\mathbb{R}_+}
\newcommand{\M}{\mathfrak{M}}
\newcommand{\lp}{\llparenthesis}
\newcommand{\rp}{\rrparenthesis}
\newcommand{\Q}{\mathbf{QA}}
\newcommand{\CM}{\mathbf{C}}
\newcommand{\str}[3]{(#1_{#2})_{#2\in #3}}
\newcommand{\ignore}[1]{}
\tikzset{
	commutative diagram/.style 2 args={
		matrix of math nodes, row sep=#1,column sep=#2,
		text height=1.5ex, text depth=0.25ex},
	commutative diagram/.default={1cm}{1cm}
}
\tikzset{    
	skip loop/.style n args={3}{to path={-- ++(0,#1) -| node[pos=0.25,#2] {#3} (\tikztotarget)}},
	cross line/.style={preaction={draw=white, -, line width=6pt}}
}
\begin{document}
	\bibliographystyle{alpha} %
	
	\title{Fixed-Points for Quantitative Equational Logics}
	\author{
		\begin{tabular}[t]{c}
			Radu Mardare \\
			{\small Department of Computer Science}\\
			{\small University of Aalborg}
		\end{tabular}
		$\qquad$
		\begin{tabular}[t]{c}
			Prakash Panangaden\thanks{Research supported by NSERC, Canada.}\\
			{\small School of Computer Science}\\
			{\small McGill University}
		\end{tabular}
		\\
		\begin{tabular}[t]{c}
			\\
			Gordon Plotkin\\
			{\small School of Informatics}\\
			{\small University of Edinburgh}
		\end{tabular}
	}
	
	\date{\mydate\today}
	
	\maketitle
\begin{abstract}
We develop a fixed-point extension of quantitative equational logic and give
semantics in one-bounded complete quantitative algebras.  Unlike previous related work
about fixed-points in metric spaces, we are working with the notion of
approximate equality rather than exact equality.  The result is a novel theory
of fixed points which can not only provide solutions to the traditional
fixed-point equations but we can also define the rate of convergence to the
fixed point.  We show that such a theory is the quantitative analogue of a
Conway theory and also of an iteration theory; and it reflects the metric coinduction principle.  We study the Bellman equation
for a Markov decision process as an illustrative example. 
\end{abstract}

\section{Introduction}

Quantitative equational logic was introduced in~\cite{Mardare16,Mardare17} as a way of
generalizing the standard concept of equational logic to encompass the concept
of approximate equality.  Essentially, it allows one to use a logical framework
to perform metric reasoning.  The present work is an extension of that formalism
to reason about fixed points of functions.  Fixed point theory is the
mathematical way to understand recursion and iteration~\cite{Scott69,DeBakker71}
and was extensively studied in a partial order setting based ultimately on
Kleene's fixed point theorem~\cite{Kleene52} or some other related fixed-point
theorem like the Knaester-Tarski theorem.  In this paper we develop the metric
version of fixed point theory based on the Banach fixed point theorem, which
says that contractive functions on a bounded complete metric space have \emph{unique}
fixed points.

We follow the categorical axiomatization of fixed-point theories by Simpson and
Plotkin~\cite{Simpson00}, which focusses on the \textit{Conway theories} developed independently by Bloom and Esik \cite{Bloom93} and by Hasegawa \cite{Hasegawa99}. We develop an axiomatization that satisfies quantitative
analogues of their formulations.  We are also able to leverage the completeness
proof from \cite{Mardare16} to obtain a completeness result in our case.  We
also give an axiomatization of fixed-point operators and show how one can reason
about convergence and convergence rates.  We study the relation to a metric
coinduction principle due to Kozen~\cite{Kozen06,Kozen07}: our axiomatization is the metric analogue of \textit{Park induction} and the Kozen coinduction principle is the quantitative version of \textit{Scott induction}, see \cite{Esik95} for a comprehensive presentation of these. Finally we develop
an extended example: the Bellman equation for Markov Decision
Processes~\cite{Puterman94} which plays a central role in reinforcement
learning~\cite{Sutton98}.

We summarize very briefly the formalism introduced in~\cite{Mardare16,Mardare17}.  The
equality symbol $=$ is annotated by a (small) real number $\varepsilon$ so that
one can write \emph{approximate equality} statements of the form:
\(s =_{\varepsilon} t\), where $s,t$ are terms of some theory.  Intuitively, one
thinks of this as meaning that $s$ and $t$ are ``within $\varepsilon$'' of each
other.  The rules of quantitative equational logic are analogous to the rules
for ordinary equational logic except for an infinitary ``continuity'' rule that
allows one to infer $s =_{\varepsilon}t$ from \(s=_{\varepsilon_i}t\) where
the $\varepsilon_i$ converge to $\varepsilon$ from above.  One can then
introduce quantitative algebras which are algebras that have metric structure
and in which all the operations are nonexpansive.  A completeness theorem is
established and it is shown that free algebras can be defined and one can relate
theories to monads on suitable categories of metric spaces.  One of the main
examples given in \cite{Mardare16} is related to spaces of probability
distributions with the Kantorovich metric.

The authors of \cite{Mardare16} have used \emph{extended} metrics: metrics that
can take on infinite values.  We have used $1$-bounded metrics in this paper
instead.  From the topological point of view these are the same: by using the
standard transformation \( d'(x,y) = d(x,y)/(1+d(x,y))\) one can transform the
extended metric $d$ into a $1$-bounded metric \emph{with the same topology}.
Interestingly, under this transformation a contractive function in the
$1$-bounded sense becomes a function that moves all points into the same
connected component in the extended metric sense.

There is a comprehensive study of iteration theories~\cite{Bloom93} which
develops a variety of examples including metric fixed point theories.  We will
comment on this and other interesting related work~\cite{Goncharov18,Kozen06} at
the end of this paper.  For now we remark that other treatments of metric
fixed-point theories are based on the traditional notion of equality and hence
do not allow quantitative reasoning about convergence.  There are a number of
examples from \cite{Mardare16}, such as barycentric algebras, that cannot be
done without the quantitative setting.  We also have new examples such as the
combination of probabilistic choice and nondeterminism.  

In order to carry out our program we are forced to keep track not just of the
fact that functions are contractive but \emph{exactly how contractive they are}
and, furthermore, we need to track this information for each input to the
function.  So the traditional notion of arity needs to be enriched with
quantitative information that we call \emph{Banach patterns}.  The details are,
in some places, intricate but the intuition will be, we hope, clear.  We have
not seen any related work that keeps track of this kind of quantitative
information. 

\section{Notation}

In what follows we will often manipulate tuples of real numbers.  These encode
the contractiveness information that we need in order to be able to define fixed
points, and are useful for managing sets of variables in complex terms.

If $\ol\alpha=\ang{\alpha_1,..,\alpha_n}$, $\ol{\beta}=\ang{\beta_1,..,\beta_m}$
are tuples for $n\geq 1$ and $i\leq n$, let $|\ol\alpha|=n$ and we use the following notations
\\$\ol\alpha\setminus
i=\ang{\alpha_1,..\alpha_{i-1},\alpha_{i+1},..\alpha_n}$,
\\for $x\in\reals$, $\ol\alpha[x/i]=\ang{\alpha_1,..\alpha_{i-1},x,\alpha_{i+1},..\alpha_n}$
and
\\$\ol\alpha[\ol\beta/i]=\ang{\alpha_1,..\alpha_{i-1},\beta_1..\beta_m,\alpha_{i+1},..\alpha_n}$. 
\\ If we have a tuple $\ol\alpha$, we denote its $i$-th component by $\alpha_i$.

Let $\mathbb U_n$ denote the set of all tuples
$\ol\alpha=\ang{\alpha_1..\alpha_n}\in[0,1]^n$ s.t. $\displaystyle\sum_{1\leq
  i\leq n}\alpha_i\leq 1$. And let $\displaystyle\mathbb
U=\bigcup_{i\geq 0}\mathbb U_i$.  

For arbitrary $\ol\alpha,\ol\alpha^1,\ldots,\ol\alpha^n, \ang{\lambda_1,..,\lambda_n}\in\mathbb U_n$ and $r\leq 1$, we define the following operations:
\begin{enumerate}
	\item Scalar multiplication. $r\ol\alpha=\ang{r\alpha_1,\ldots, r\alpha_n}$
	\item Subconvex
      sum. $\sum_i\lambda_i\ol\alpha^i=\ang{\sum_i\lambda_i\alpha_1^i,..,\sum_i\lambda_n\alpha_n^i}$ 
	\item Contraction. For $i<j$, $\ol\alpha[i<j]=(\ol\alpha\setminus j)[\alpha_i+\alpha_j/i]$.
	\item Iteration. For $i\leq n$ s.t. $\alpha_i<1$, $\mu i.\ol\alpha=\frac{1}{1-\alpha_i}(\ol\alpha\setminus i)$.
\end{enumerate}



 \subsection{Banach patterns}

In what follows we introduce the concept of \textit{Banach pattern} that will be used to characterize nonexpansive functions on metric spaces.
Recall that if $(A,d^A)$ and $(B,d^B)$ are metric spaces, then $f:(A,d^A)^n\to(B,d^B)$ is a nonexpansive function if for arbitrary $\ang{a_1,..,a_n}, \ang{b_1,..,b_n}\in A^n$, 
$$d^B(f(a_1..a_n),f(b_1..b_n))\leq\max_{i\leq
	n}d^A(a_i,b_i).$$  
\begin{definition}
  Let $f:(A,d^A)^n\to(B,d^B)$ be a function between two metric
  spaces. $f$\textit{ admits Banach patterns} if there exists a set $\theta\subseteq_{\textit{fin}}\mathbb U_n$ such
  that for any $\ang{a_1..a_n}, \ang{b_1..b_n}\in A^n$,
  $$d^B(f(a_1..a_n),f(b_1..b_n))\leq\max_{\ol\alpha\in\theta}\sum_{i\leq
    n}\alpha_i d^A(a_i,b_i).$$
  In this case, $\theta$ is a \textbf{Banach pattern} for $f$, and we write $f:n:\theta$.
\end{definition}

\begin{example}
  Let $(M,d)$ be a
  $1$-bounded metric space and $\Delta(M,d)$ the space of Borel probability
  distributions on $(M,d)$ metrized with the Kantorovich metric \\$K^d:\Delta(M,d)^2\to[0,1]$.
	
	Consider, for $\e\in [0,1]$, the barycentric operation on $\Delta(M,d)$,
    $+_\e:\Delta(M,d)^2\to\Delta(M,d)$ defined for arbitrary
    $\mu,\nu\in\Delta(M,d)$ by $$\mu+_\e \nu =\e\mu+(1-\e)\nu.$$ 
	In \cite{Mardare16} it has been demonstrated that for arbitrary
    $\mu,\mu',\nu,\nu'\in\Delta(M,d)$,  
	$$K^d(\mu+_\e\mu',\nu+_\e\nu')\leq \e K^d(\mu,\mu')+(1-\e)K^d(\nu,\nu'),$$
	hence, $+_\e$ has Banach pattern the singleton $\{\ang{\e,1-\e}\}$.
  \end{example}	
\begin{example}
  For another example where the pattern is not a singleton we consider the
  non-deterministic choice function on $\Delta$,
  $\oplus:\Delta(M,d)^2\to H(\Delta(M,d))$, where for a metric space $X$, $HX$
  denotes the space of compact subsets equipped with the Hausdorff metric.  The
  function $\oplus$ is nonexpansive in the Hausdorff metric
  defined for $K^d$, \cite{Mardare16}.  Being nonexpansive in this sense, this
  function satisfies for arbitrary $\mu,\mu',\nu,\nu'\in\Delta(M,d)$,
	$$K^d(\mu\oplus\mu',\nu\oplus\nu')\leq \max\{K^d(\mu,\nu),K^d(\mu',\nu')\}.$$
	In this case the Banach pattern is not a singleton, but we have
    $\oplus:2:\{\ang{0,1},\ang{1,0}\}$. 
  \end{example}	
  \begin{example}
    For a third example, we consider, the
    function $$f:\Delta(M,d)^3\to\Delta(M,d)$$ defined, for arbitrary
    $\mu,\nu,\eta\in\Delta(M,d)$ by $$f(\mu,\nu,\eta)=(\mu+_\e \nu)\oplus\eta,$$
    for some $\e\leq 1$.  We note that for arbitrary
    $\mu,\nu,\eta,\mu',\nu',\eta'\in\Delta(M,d)$,
	$$K^d(f(\mu,\nu,\eta),f(\mu',\nu',\eta'))$$ $$\leq \max\{\e
    K^d(\mu,\mu')+(1-\e)K^d(\nu,\nu'), K^d(\eta,\eta')\},$$ and in this case we
    have $f:3:\{\ang{\e,1-\e,0},\ang{0,0,1}\}$. 
\end{example}

Observe that a function $f:(A,d^A)^n\to(B,d^B)$ is nonexpansive iff it admiths Banach patterns. Indeed, if $f$ is nonexpansive, then $$\{\ang{1,0,..0}, \ang{0,1,0..,0},..,\ang{0,..,0,1}\}\subseteq U_n$$ is a Banach pattern for it, the one encoding exactly the nonexpansiveness property. And reverse, if $f$ admits a Banach pattern $\theta\subseteq_{\textit{fin}}\mathbb U_n$, then nonexpansivess derives from  
$$\max_{\ol\alpha\in\theta}\sum_{i\leq n}\alpha_i d^A(a_i,b_i)\leq \max_{i\leq n} d^A(a_i,b_i).$$ However, often a Banach pattern brings more information about the nonexpansiveness of a function.

We will add Banach patterns to the algebraic signatures over
the category of metric spaces when we will define quantitative algebras with
fixed points.  

It is useful to define some operations on patterns, in addition to the set
theoretic operations. Let $\theta,\theta^1..\theta^n\subseteq\mathbb U_n$,
$\lambda\leq 1$ and $\ang{\lambda_1,..,\lambda_n}\in\SSS_n$. 
\begin{enumerate}
	\item Scalar
      multiplication. $\lambda\theta=\{\lambda\ol\alpha\mid\ol\alpha\in\theta\}\subseteq\SSS_n$. 
	\item Subconvex sum. $\displaystyle\sum_{i\leq
        n}\lambda_i\theta^i=\{\sum_{i\leq
        n}\lambda_i\ol\alpha^i\mid\ol\alpha^i\in\theta_i\}$. 
	\item Contraction. $\theta[i<j]=\{\ol\alpha[i<j]\mid\ol\alpha\in\theta\}\subseteq\SSS_{n-1}$.
	\item Composition. For
      $\zeta_1..\zeta_n\in\SSS_m$, $$\theta\circ\ang{\zeta_1..\zeta_n}=\{\sum_{i\leq
        n}\alpha_i\ol\beta^i\mid\ol\alpha\in\theta,
      \ol\beta^i\in\zeta_i\}\subseteq\SSS_m.$$ 
	\item Fixed point. If for all $\ol\alpha\in\theta$, $\alpha_i<1$, let $$\mu
      i.\theta=\{\mu i.\ol\alpha\mid\ol\alpha\in\theta\}\subseteq\SSS_{n-1}.$$ 
\end{enumerate}
Whenever $\theta$ satisfies $[\forall\ol\alpha\in\theta$, $\alpha_i<1]$, we say
that $\theta$ is \textit{i-contractive} and denote this by $\theta\triangleright
i$. 

We also generalize the notation we introduced for tuples and for
$\theta\subseteq\SSS_n$, $\zeta\subseteq\SSS_m$ and $i\leq n$, let
\\$\theta\setminus i=\{\ol\alpha\setminus
i\mid\ol\alpha\in\theta\}\subseteq\SSS_{n-1}$  and
\\$\theta[\zeta/i]=\{\ol\alpha[\ol\beta/i]\mid\ol\alpha\in\theta,
\ol\beta\in\zeta\}\subseteq\SSS_{n+m-1}$. 


\section{Quantitative Equational Reasoning}\label{quant.eq.th}

In this section we recall the main concepts of quantitative equational reasoning
and quantitative algebras~\cite{Mardare16}. 

\subsection{Quantitative Equational Theory}

We start with a \emph{signature} $\Omega$, which is a set of function symbols of
finite arity (constants have arity $0$).  We write $f:n\in\Omega$ for a function
$f$ of arity $n\geq 0$.

Given a set $X$, let $\Om X$ be the $\Omega$-\emph{algebra generated by} $X$,
i.e., the set of all terms constructed on top of $X$ by using the functions in
$\Omega$.  Note that this set comes already equipped with the structure of an
$\Omega$-algebra.  

For a set $X$ of \emph{variables}, one defines \emph{quantitative
  equations}\footnote{In~\cite{Mardare16} quantitative equations are defined for
  $\e\in\prationals$.  We chose to avoid this restriction here in order to get a
  simpler axiomatization. However, all these developments work properly if we
  restrict to rational indices.} over $\Om X$, which have the form $t=_\e s$ for
$t,s\in\Om X$ and $\e\in\preals$.  We use $\mathcal E(\Om X)$ to denote the set
of quantitative equations on $\Om X$.

Let $\mathcal J(\Om X)$ be the class of \emph{quantitative judgements} on $\Om
X$, which are constructions of the form 
$$\{s_i=_{\e_i}t_i\mid i\in I\}\vdash s=_\e t,$$ where $I$ is
a countable (possible empty) index set, $s_i, t_i,s,t\in\Om X$ and
$\e_i,\e\in\preals$ for all $i\in I$.\\If
$\Gamma\vdash\phi\in\mathcal J(\Om X)$, where $\Gamma\subseteq\mathcal E(\Om X)$ and
$\phi\in\mathcal E(\Om X)$, we refer to the elements of $\Gamma$ as the
\emph{hypotheses} and to $\phi$ as the \textit{conclusion} of the
quantitative judgement.

\begin{definition}[Quantitative Equational Theory] \label{def:QEtheory}
	Given a signature $\Omega$ and a set $X$ of variables,
	the \emph{deductive closure} of a set $\U$ of quantitative judgements on
	$\Om X$ is the smallest set $\ol\U$ of quantitative judgements on $\Om X$
	such that $\U\subseteq\ol\U$, and for arbitrary $t,s \in \Om X$,
	$\e,\e'\in\preals$, $f:|I|\in\Omega$, 	$\Gamma,\Theta\subseteq\mathcal \E(\Om X)$ and $\ol s=\str{s}{i}{I}, \ol t=\str{t}{i}{I}\subseteq\Om X$ and any substitution $\sigma$
	%
	\begin{align*} 
		\text{\textbf{(Refl)}} \quad 
		& \vdash t =_0 t\in\ol\U \,, \\
		\text{\textbf{(Symm)}} \quad 
		& \{t=_\e s\} \vdash s=_\e t\in\ol\U \,, \\
		\text{\textbf{(Triang)}} \quad 
		& \{t =_\e u, u =_{\e'} s \} \vdash t =_{\e+\e'} s\in\ol\U \,, \\
		\text{\textbf{(Max)}} \quad 
		& \{t=_\e s\} \vdash t=_{\e+\e'}s\in\ol\U \,, \text{ for all $\e'>0$} \,, \\ 
		\text{\textbf{(NExp)}} \quad
		& \{t_i=_\e s_i\mid i\in I\} \vdash f(\ol t) =_\e f(\ol s)\in\ol\U \,;
	\end{align*}
	%
	and $\ol\U$ is closed under the following rules
	%
	\begin{align*}
		\text{\textbf{(Cont)}} \quad 
		& \frac{\Gamma\vdash s=_{\e'}t \text{ for all }\e'>\e}{\Gamma\vdash s=_\e t} \,, \\
		\text{\textbf{(Subst)}} \quad
		&\frac{\Gamma \vdash t =_\e s}{\sigma(\Gamma) \vdash \sigma(t) =_\e \sigma(t)} \,, \\
		\text{\textbf{(Assumpt)}} \quad
		& \frac{t =_\e s \in\Gamma}{\Gamma \vdash t =_\e s}\,, \\
		\text{\textbf{(Cut)}} \quad 
		& \frac{\Theta \vdash t =_\e s,~ \Gamma \vdash\Theta}{\Gamma \vdash t =_\e s}\,.
	\end{align*}
	where $\Gamma\vdash\Theta$ means that $\Gamma\vdash \phi$ for all
    $\phi\in \Theta$. 
	A \emph{quantitative equational theory of signature $\Omega$ over $X$} is a set
	$\U$ of quantitative judgements on $\Om X$ such that $$\U=\ol\U.$$  
\end{definition}

\begin{df}[Quantitative Algebra]\label{QA}
	Given a signature $\Omega$, a \emph{quantitative algebra over $\Omega$} is a tuple
	$\A=(A,\Omega^\A,d^\A)$, where $(A,\Omega^\A)$ is an algebra of
		signature $\Omega$, 
		$(A,d)$ is a metric space and any $f:|I|\in\Omega$ is nonexpansive.
\end{df}

A \emph{homomorphism of quantitative algebras} is a \emph{non-expansive}
$\Omega$-homomorphism (of $\Omega$-algebras). 

Given a quantitative algebra $\A=(A,\Omega^\A,d^\A)$ of signature $\Omega$ and a
set $X$ of variables, an \emph{assignment} on $\A$ is a function
$\alpha:X\to A$.  It can be canonically extended to a homomorphism of
$\Omega$-algebras $\alpha:\Om X\to A$ by defining, for any $f:|I|\in\Omega$ and
any $\str{t}{i}{I}\subseteq\Om X$,
	$$\alpha(f(\str{t}{i}{I}))=f^\A((\alpha(t_i))_{i\in I}).$$ 
We denote by $\Omega[X|\A]$ the set of assignments on $\A$.

\begin{df}[Satisfaction]
	Let $\A=(A,\Omega^\A,d^\A)$ be an $\Omega$-quantitative algebra and $\{s_i=_{\e_i}t_i\mid i\in I\}\vdash s=_\e t$ a quantitative judgement on $\Om X$. $\A$ \emph{satisfies} this quantitative judgement \textit{under the
		assignment $\alpha\in\Omega[X|\A]$}, written
	$$\{s_i=_{\e_i}t_i\mid i\in I\}\models_{\A,\alpha} s=_\e t,$$ if
	$[\forall i\in I,d^\A(\alpha(t_i),\alpha(s_i))\leq \e_i]$
		implies $d^\A(\alpha(s),\alpha(t))\leq\e$.	
	\\ We say $\A$ \emph{satisfies the quantitative judgement}, or it is a
	\emph{model} of the quantitative judgement, written  $$\{s_i=_{\e_i}t_i\mid
	i\in I\}\models_{\A} s=_\e t,$$  if $$\forall\alpha\in\Omega[X|\A],
    ~\{s_i=_{\e_i}t_i\mid i\in I\}\models_{\A,\alpha} s=_\e t.$$  
\end{df}

Similarly, for a set of quantitative judgements (or a quantitative equational
theory) $\U$, we say that $\A$ is a model of $\U$ if $\A$ satisfies every
element of $\U$; for simplifying notation we denote this by $\A\models\U$. Let
$\Q(\U)$ denote the set of models of $\U$.  If $\M$ is a set of
$\Omega$-quantitative algebras and $\Gamma\vdash\phi\in\mathcal J(\Om X)$, we
write $\Gamma\models_{\M}\phi$ whenever $\Gamma\models_{\A}\phi$ for all
$\A\in\M$.  In \cite{Mardare16} the following completeness result is
established.
\begin{theorem}[Completeness]\label{completenessQA}
	Given a quantitative  equational theory $\U$ over $\Om X$, 
	\[\Gamma\models_{\Q(\U)}\phi~~\text{ iff }~~\Gamma\vdash\phi\in\U.\]
\end{theorem}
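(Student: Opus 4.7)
The plan is to prove both directions by the standard strategy of soundness-by-induction and completeness-via-a-canonical-term-model. Although the statement is quoted from \cite{Mardare16}, the setting of the present paper (using $\e\in\preals$ rather than $\prationals$) means we should sketch how the construction goes through.

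For soundness ($\vdash \Rightarrow \models$), I would fix an arbitrary $\A\in\Q(\U)$ and an assignment $\alpha\in\Omega[X|\A]$ and induct on the derivation of $\Gamma\vdash\phi\in\ol\U$. The axioms $\textbf{(Refl)}$, $\textbf{(Symm)}$, $\textbf{(Triang)}$, and $\textbf{(Max)}$ translate directly into the defining properties of the metric $d^\A$; $\textbf{(NExp)}$ is exactly the nonexpansiveness requirement in Definition \ref{QA}; $\textbf{(Assumpt)}$ and $\textbf{(Cut)}$ are immediate from the definition of $\models_{\A,\alpha}$; $\textbf{(Subst)}$ follows because $\sigma$ can be absorbed into the assignment $\alpha\circ\sigma$; and $\textbf{(Cont)}$ uses the fact that $d^\A(\alpha(s),\alpha(t))\leq\e'$ for every $\e'>\e$ implies $d^\A(\alpha(s),\alpha(t))\leq\e$.

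For completeness ($\models \Rightarrow \vdash$), fix a set of hypotheses $\Gamma\subseteq\mathcal E(\Om X)$ and build the canonical quantitative algebra as follows. On $\Om X$ define
\[
 d_\Gamma(s,t) \;=\; \inf\{\,\e\in\preals \mid \Gamma\vdash s=_\e t\in\U\,\}.
\]
The axioms and rules give $d_\Gamma$ the properties of a (pseudo)metric: $\textbf{(Refl)}$ yields $d_\Gamma(t,t)=0$; $\textbf{(Symm)}$ yields symmetry; $\textbf{(Triang)}$ yields the triangle inequality; and $\textbf{(Cont)}$ ensures that the infimum is attained, so that $d_\Gamma(s,t)\leq\e$ iff $\Gamma\vdash s=_\e t\in\U$. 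Quotient by the kernel $\{(s,t)\mid d_\Gamma(s,t)=0\}$ to obtain a genuine metric space $\cT_\Gamma$, and interpret each $f\in\Omega$ as the induced function on equivalence classes. The congruence/well-definedness of this interpretation, together with its nonexpansiveness, is supplied by $\textbf{(NExp)}$, so $\cT_\Gamma$ is a quantitative algebra. Using $\textbf{(Subst)}$ and $\textbf{(Cut)}$ one checks that every judgement in $\U$ holds in $\cT_\Gamma$ under every assignment: any assignment of variables into $\cT_\Gamma$ lifts to a substitution on $\Om X$, and a derivation $\vdash\psi\in\U$ yields, after substitution, a derivation of the instantiated judgement, so the distances in $\cT_\Gamma$ are bounded appropriately. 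Hence $\cT_\Gamma\in\Q(\U)$.

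To conclude, let $\alpha_\Gamma\colon X\to\cT_\Gamma$ be the assignment sending each variable to its equivalence class. By construction $d_\Gamma(s_i,t_i)\leq\e_i$ for every $(s_i=_{\e_i}t_i)\in\Gamma$, so $\Gamma$ is satisfied under $\alpha_\Gamma$. If $\Gamma\models_{\Q(\U)}\phi$ with $\phi=(s=_\e t)$, then in particular $\Gamma\models_{\cT_\Gamma,\alpha_\Gamma}\phi$, giving $d_\Gamma(s,t)\leq\e$ and therefore $\Gamma\vdash s=_\e t\in\U$, as required.

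The main obstacle is ensuring that $d_\Gamma$ really is well-behaved: specifically, that the infimum defining it is realized (so that $\leq$ in the metric matches $\vdash$ in $\U$) and that the quotient by the distance-zero kernel is compatible with the $\Omega$-operations. Both hinge on $\textbf{(Cont)}$ and $\textbf{(NExp)}$ being strong enough; the first was the reason the authors needed an infinitary continuity rule, and the second is why nonexpansiveness is built into the notion of quantitative algebra. Once these are in place the rest of the argument is the familiar canonical-model proof.
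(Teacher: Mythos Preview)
The paper does not supply its own proof of this theorem; it simply quotes the result from \cite{Mardare16}. Your sketch is the standard soundness-plus-canonical-term-model argument used there, and it is correct in outline: soundness by induction on derivations, completeness by building the quotient of $\Om X$ by the pseudometric $d_\Gamma(s,t)=\inf\{\e\mid\Gamma\vdash s=_\e t\in\U\}$, with \textbf{(Cont)} ensuring the infimum is attained and \textbf{(NExp)} ensuring the operations descend to the quotient.

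One small point worth flagging: in the general setting of Section~\ref{quant.eq.th} the axiom \textbf{(1-bound)} is not assumed, so nothing forces the set $\{\e\mid\Gamma\vdash s=_\e t\in\U\}$ to be nonempty; your $d_\Gamma$ is then an \emph{extended} pseudometric, which is exactly the setting of \cite{Mardare16}. The present paper later restricts to $1$-bounded metrics, but Theorem~\ref{completenessQA} as stated is the cited result in the extended-metric category, so your construction is the right one.
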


\subsection{Limits in quantitative theories}

Although not explicit in~\cite{Mardare16}, quantitative equational theories have built in the mechanism for equational
reasoning about convergence: this will be useful to us.

\begin{definition}
In general, given a quantitative equational theory $\U$ over $\Om X$, we say
that a sequence $(s_i)_{i\geq 1}\subseteq\Om X$ is \emph{convergent} in $\U$ if
there exists $s\in\Om X$ such that
$$\forall\e>0~\exists k~\forall i\geq k,~\vdash s_i=_\e s\in\U.$$
We say that $s$ is a \emph{limit} of the sequence $(s_i)_{i\geq 1}$.
\end{definition}
It is easy to prove the following using (Triang), (Symm) and (Cont).
\begin{proposition}\label{limit}
	Let $\U$ be a quantitative equational theory over $\Om X$. If the sequence
    $(s_i)_{i\geq 1}\subseteq\Om X$ is convergent in $\U$ and it has both
    $s\in\Om X$ and $t\in\Om X$ as limits, then $$\vdash s=_0t\in\U.$$  
\end{proposition}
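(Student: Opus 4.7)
The plan is to show $\vdash s=_\e t \in \U$ for every $\e>0$ and then conclude by the (Cont) rule. Fix $\e>0$ and split it as $\e/2 + \e/2$. By the definition of convergence applied to the limit $s$, there exists $k_1$ such that $\vdash s_i =_{\e/2} s \in \U$ for all $i \geq k_1$. Applying the same definition to the limit $t$, there exists $k_2$ such that $\vdash s_i =_{\e/2} t \in \U$ for all $i \geq k_2$.

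Now choose any index $i \geq \max(k_1,k_2)$. From $\vdash s_i =_{\e/2} s \in \U$, the rule (Symm) (applied via (Cut) with the trivially derivable hypothesis) yields $\vdash s =_{\e/2} s_i \in \U$. Combining this with $\vdash s_i =_{\e/2} t \in \U$ by (Triang) (again via (Cut)) gives $\vdash s =_{\e/2 + \e/2} t \in \U$, that is, $\vdash s =_\e t \in \U$.

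Since $\e>0$ was arbitrary, we have $\vdash s =_{\e'} t \in \U$ for every $\e'>0$. Applying (Cont) with $\e=0$ to the empty hypothesis set, we conclude $\vdash s =_0 t \in \U$, as desired.

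The argument is essentially routine; there is no real obstacle. The only mild subtlety is that (Symm) and (Triang) are stated as quantitative judgements with hypotheses rather than as inference rules on unconditional judgements, so strictly speaking one needs (Cut) (together with (Assumpt) or the derivability of the hypotheses) to chain them with the unconditional statements $\vdash s_i =_{\e/2} s$ and $\vdash s_i =_{\e/2} t$. Once this bookkeeping is made explicit the derivation is immediate.
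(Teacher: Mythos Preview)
Your proof is correct and follows essentially the same approach as the paper, which merely remarks that the result follows easily from (Triang), (Symm) and (Cont). Your explicit mention of (Cut) to chain the axiom schemes with the unconditional judgements is a fair bit of bookkeeping but does not change the argument.
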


This motivates us to denote the limit of the sequence $(s_i)_{i\geq 1}$ by
$\lim_{i}s_i$. 

We can construct convergent sequences of terms by applying non-expansive
functions to convergent sequences. 
\begin{lemma}\label{limit1}
	Let $\U$ be a quantitative equational theory over $\Om X$, $f:n,g:m\in\Om X$ and  $(s_k)_{k\geq 1}\subseteq\Om X$ be a convergent sequence in $\U$. Then for $\ol x\subseteq X^n$ and $\ol y\in X^m$,
	\\(1). $(g(\ol y[s_k/j]))_{k\geq 1}$ is a convergent sequence in $\U$ and $$\vdash \lim_{k}g(\ol y[s_k/j])=_0 g(\ol y[\lim_{k}s_k/j])\in\U.$$
	(2). $(f(\ol x[g(\ol y[s_k/j])/i]))_{k\geq 1}$ is convergent in $\U$ and 
	$$\vdash \lim_k f(\ol x[g(\ol y[s_k/j])/i])=_0 f(\ol x[\lim_k g(\ol y[s_k/j])/i])\in\U.$$
\end{lemma}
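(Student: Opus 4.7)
The plan is to derive both parts from a single core observation: applying a non-expansive function (enforced by axiom (NExp)) to a convergent sequence yields a convergent sequence with the expected limit. There is essentially no deep content beyond translating (NExp) into the language of limits; the work is bookkeeping.

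For part (1), I would fix $\e>0$ and set $s=\lim_k s_k$. By the definition of convergence there exists $K$ such that $\vdash s_k=_\e s\in\U$ for all $k\geq K$. To apply (NExp) to $g$, I need a distance estimate at every position of $\ol y$, not only at position $j$. For each $j'\neq j$, I would first invoke (Refl) to get $\vdash y_{j'}=_0 y_{j'}\in\U$ and then (Max) to weaken this to $\vdash y_{j'}=_\e y_{j'}\in\U$. Combining all these premises with $\vdash s_k=_\e s$ at position $j$, the (NExp) rule applied to $g$ yields $\vdash g(\ol y[s_k/j])=_\e g(\ol y[s/j])\in\U$ for all $k\geq K$. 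This exhibits $g(\ol y[s/j])$ as a limit of $(g(\ol y[s_k/j]))_{k\geq 1}$, so the sequence is convergent. Then Proposition~\ref{limit} (uniqueness of limits up to $=_0$) gives $\vdash \lim_k g(\ol y[s_k/j])=_0 g(\ol y[\lim_k s_k/j])\in\U$.

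For part (2), I would simply iterate the reasoning in (1). Part (1) shows that $(g(\ol y[s_k/j]))_{k\geq 1}$ is convergent in $\U$, so we may take it as the input convergent sequence and apply part (1) a second time, this time with $f$ as the outer function and position $i$ in $\ol x$. This immediately yields both the convergence of $(f(\ol x[g(\ol y[s_k/j])/i]))_{k\geq 1}$ and the limit identity $\vdash \lim_k f(\ol x[g(\ol y[s_k/j])/i])=_0 f(\ol x[\lim_k g(\ol y[s_k/j])/i])\in\U$.

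The main (mild) obstacle is the (NExp) rule's requirement that \emph{every} argument position carry the same distance $\e$: one must pad the trivial positions via (Refl) followed by (Max) in order to feed a well-formed family of hypotheses to (NExp). Once this padding is in place, the remainder of the argument is a routine combination of (NExp), the definition of convergence, and Proposition~\ref{limit}; no appeal to (Cont) or (Cut) is needed.
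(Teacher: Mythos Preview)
Your proof is correct and follows essentially the same approach as the paper: both apply (NExp) to push the limit through $g$ for part~(1) and then iterate for part~(2), with the paper reaching the same conclusion in (2) via a small (Triang) step rather than a direct reapplication of (1). One minor quibble: your closing remark that (Cut) is not needed is not quite right, since you implicitly use (Cut) to discharge the hypotheses of the (NExp) instance from the unconditional judgements you assembled via (Refl) and (Max).
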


\begin{proof}
	(1). Let $s=\displaystyle\lim_k s_k$. Hence, $\forall\e>0~\exists p~\forall
    k\geq p$, \\$\vdash s_k=_\e s\in\U$.	Applying (NExp), we get that
    \\$\forall\e>0~\exists p~\forall i\geq p$, $\vdash g(\ol y[s_k/j])=_\e g(\ol
    y[s/i])\in\U$, i.e., \\$\vdash \lim_{k}g(\ol y[s_k/j])=_0 g(\ol
    y[\lim_{k}s_k/j])\in\U.$ 
	\\(2). After observing that $f(\ol x[g(\ol y)/i])$ is nonexpansive, we
    conclude, as above, that $(f(\ol x[g(\ol y[s_k/j])/i]))_{k\geq 1}$ is
    convergent in $\U$. Next, we apply (1) and prove that \\$\vdash\lim_k f(\ol
    x[g(\ol y[s_k/j])/i])=_0 f(\ol x[g(\ol y[s/j])/i])\in\U$ and \\$\vdash f(\ol
    x[\lim_k g(\ol y[s_k/j])/i])=_0 f(\ol x[ g(\ol y[s/j])/i])\in\U$. \\
    (Triang) concludes the proof. 
\end{proof}
These are easy proofs, the point of including them is to show that standard
facts about the continuity of nonexpansive functions can be stated and proved
within the framework of quantitative equational logic.


\section{Banach Quantitative Theories}

In this section we identify a particular class of quantitative equational
theories that we will call Banach theories.  Later we will see that the Banach
theories are the ones for which we can define fixed-point operators.  

A \emph{Banach signature} $\Omega$ is a signature that assigns to each function symbol $f$ an arity $n\in\mathbb N$ and a Banach pattern $\theta\in\SSS_n$; we write $f:n:\theta$.  In particular, for constants
$c\in\Omega$, we have that $c:0:\{\ang{0}\}\in\Omega$. 

We extend the concept of Banach pattern from the elements of a Banach signature
$\Omega$ to all the terms of $\Om X$ by defining, for arbitrary
$f(\ol x)\in \Om X$ with $\ol x\in X^n$ and $f:n:\theta$; and any $g_1(\ol y)$,
 $\ldots, g_n(\ol y)\in\Om X$ with $\ol y\in X^m$ and $g_i:m:\zeta_i$ for $i\leq n$,
the following Banach patterns for contraction and term composition. 
\begin{enumerate}
	\item If $i<j\leq n$ and $h(\ol x\setminus j)=f(\ol x[x_i/j])$, then $$h:n-1:\theta[i<j].$$
	\item If $h(\ol y)=f(g_1(\ol y),..,g_n(\ol y))$, then $$h:m:\theta\circ\ang{\zeta_1..\zeta_n}.$$
\end{enumerate}
With this definition, we will write $t:n:\theta\in\Om X$ to describe any $n$-ary
term with Banach pattern $\theta$ that can be defined in $\Om X$.  If, in
addition $\theta\triangleright i$, we write
$$f:n:\theta\triangleright i\in\Om X.$$
The reader might usefully think of these definitions first in the case where the Banach patterns are all singletons, in which
case these formulas can be seen as a quantitative analogue of how composition
would be defined in operads (multicategories).

\begin{definition}[Banach closure]
	Consider a quantitative equational theory $\U$ over a set $X$ of variables
    and a Banach signature $\Omega$. The \emph{Banach closure} of $\U$ is the
    smallest quantitative equational theory $\U^B$ that contains $\U$ together
    with the axiom 
	\begin{align*}
		\text{\textbf{(1-bound)}} \quad 
		& \vdash x=_1 y\,,
	\end{align*}
	and it is closed under the following rule stated for arbitrary $\e_i\leq 0$ for $i\leq n$.
	\begin{align*}
		\text{\textbf{(Banach)}} \quad 
		& \frac{f:n:\theta\in\Omega}{\{x_i=_{\e_i}y_i\mid i\leq n\}\vdash
       f(x_1\ldots x_n)=_{\delta}f(y_1\ldots y_n)}\,,
	\end{align*}
	where $\displaystyle\delta=\max_{\ol\alpha\in\theta}\sum_{i\leq n}\alpha_i\e_i$.
\end{definition}


\begin{definition}[Banach theory]
	A quantitative equational theory $\U$ over $\Om X$ is a \emph{Banach theory}
    if $\Omega$ is a Banach signature and $$\U=\U^B.$$ 
\end{definition}

The following two results guarantee that the way we defined the patterns for
composition and contraction respects the Banach rule. 

\begin{lemma}\label{Banach}
	Let $\U$ be a Banach theory over $\Om X$ and $t:n:\theta\in\Om X$. Then, for
    arbitrary $\e_i\geq 0$ for $i\leq n$, 
	$$\{x_i=_{\e_i}y_i\mid i\leq n\}\vdash t(x_1\ldots x_n)=_{\delta}t(y_1\ldots
    y_n)\in\U,$$ where $\displaystyle\delta=\max_{\ol\alpha\in\theta}\sum_{i\leq
      n}\alpha_i\e_i$. 
\end{lemma}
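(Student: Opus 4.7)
The plan is to proceed by structural induction on the way the Banach pattern of $t$ is assigned, which follows the derivation of $t$ from signature symbols via the contraction and composition clauses of the pattern definition given just above the lemma. The base case, $t = f(x_1, \ldots, x_n)$ with $f:n:\theta \in \Omega$ (including the constant case $n = 0$, where $\delta = 0$ and the claim follows from (Refl)), is immediate from the (Banach) axiom of $\U^B = \U$.

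For the contraction case, where $t(\ol x \setminus j) = s(\ol x[x_i/j])$ with $s:n:\theta'$ and $\theta = \theta'[i{<}j]$, I would invoke the inductive hypothesis on $s$, choosing $\e'_k = \e_k$ for $k \neq j$ and $\e'_j = \e_i$, and then apply (Subst) with the substitution $x_j \mapsto x_i$, $y_j \mapsto y_i$. The substituted premise $x_i =_{\e_i} y_i$ is already a hypothesis, so the hypothesis set is unchanged. The resulting bound simplifies to
$$\max_{\ol\alpha \in \theta'}\Bigl(\sum_{k \neq i,j} \alpha_k \e_k + (\alpha_i + \alpha_j)\e_i\Bigr) = \max_{\ol\beta \in \theta'[i{<}j]} \sum_k \beta_k \e_k = \delta,$$
directly from the definition of the contraction operation on patterns.

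For the composition case, where $t(\ol y) = f(g_1(\ol y), \ldots, g_n(\ol y))$ with $f:n:\theta'$, $g_i:m:\zeta_i$, and $\theta = \theta' \circ \ang{\zeta_1, \ldots, \zeta_n}$, I would apply the inductive hypothesis separately to each $g_i$ to obtain
$$\{y_k =_{\e_k} z_k \mid k \leq m\} \vdash g_i(\ol y) =_{\eta_i} g_i(\ol z), \quad \text{with } \eta_i = \max_{\ol\beta^i \in \zeta_i} \sum_k \beta^i_k \e_k,$$
and the inductive hypothesis to $f$ followed by (Subst) substituting $g_i(\ol y), g_i(\ol z)$ for the $i$-th argument, yielding
$$\{g_i(\ol y) =_{\eta_i} g_i(\ol z) \mid i \leq n\} \vdash t(\ol y) =_{\delta''} t(\ol z), \quad \text{with } \delta'' = \max_{\ol\alpha \in \theta'} \sum_i \alpha_i \eta_i.$$
One application of (Cut) then combines these into $\{y_k =_{\e_k} z_k \mid k \leq m\} \vdash t(\ol y) =_{\delta''} t(\ol z)$.

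The real content of the proof is then the algebraic identity $\delta'' = \delta$:
\begin{align*}
\delta''
&= \max_{\ol\alpha \in \theta'} \sum_i \alpha_i \max_{\ol\beta^i \in \zeta_i} \sum_k \beta^i_k \e_k
 = \max_{\ol\alpha \in \theta',\; \ol\beta^i \in \zeta_i} \sum_k \Bigl(\sum_i \alpha_i \beta^i_k\Bigr) \e_k \\
&= \max_{\ol\gamma \in \theta' \circ \ang{\zeta_1, \ldots, \zeta_n}} \sum_k \gamma_k \e_k = \delta.
\end{align*}
The first equality is the only slightly delicate point: it commutes $\sum_i$ with the independent inner maxima $\max_{\ol\beta^i}$, which is valid precisely because the $\ol\beta^i$ for different $i$ can be chosen independently. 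The last equality is the very definition of composition of patterns. Once this identity is in hand, the rest of the argument is routine bookkeeping with the derivation rules of the Banach theory $\U$.
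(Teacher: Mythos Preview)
Your proof is correct; the structural induction on the assignment of Banach patterns (base case by the (Banach) axiom, contraction via (Subst), composition via (Subst) and (Cut) together with the algebraic identity $\max_{\ol\alpha}\sum_i\alpha_i\max_{\ol\beta^i}\sum_k\beta^i_k\e_k=\max_{\ol\gamma\in\theta'\circ\ang{\zeta_1,\ldots,\zeta_n}}\sum_k\gamma_k\e_k$) is precisely the intended argument. The paper in fact states this lemma without proof, regarding it as routine; your write-up fills in exactly the details one would expect.
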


\begin{corollary}
	Let $\U$ be a Banach theory over $\Om X$ and
    $\A=(A,\Omega,d)\in\Q(\U)$. Then, for any term $t:n:\theta\in\Om X$ and any
    $\ol a, \ol b\in A^n$,
    $$d(t^\A(\ol a),t^\A(\ol b))\leq\max_{\ol\alpha\in\theta}\sum_{i\leq n}\alpha_i d(a_i,b_i).$$ 
\end{corollary}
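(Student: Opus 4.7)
The plan is a direct reduction to Lemma~\ref{Banach} via the definition of satisfaction. Since $\A \in \Q(\U)$ and Lemma~\ref{Banach} gives us the required quantitative judgement as an element of $\U$, the corollary should be a matter of interpreting that judgement under an appropriate assignment.

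More precisely, I would first pick $2n$ fresh variables $x_1,\ldots,x_n,y_1,\ldots,y_n \in X$ (or if $X$ is too small, reduce to the substitution-closed version; but the intended setup is that we may assume $X$ has enough variables). Set $\e_i \deq d(a_i,b_i)$ for each $i \leq n$ and let $\delta \deq \max_{\ol\alpha\in\theta}\sum_{i\leq n}\alpha_i\e_i$. Note $\e_i \geq 0$, so Lemma~\ref{Banach} applies and yields
\[
\{x_i =_{\e_i} y_i \mid i \leq n\} \vdash t(x_1,\ldots,x_n) =_\delta t(y_1,\ldots,y_n) \in \U.
\]

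Next, I would define the assignment $\alpha \in \Omega[X|\A]$ by $\alpha(x_i) = a_i$ and $\alpha(y_i) = b_i$ (and arbitrary on the remaining variables). Because $\A \models \U$, this judgement is satisfied under $\alpha$. By construction $d^\A(\alpha(x_i),\alpha(y_i)) = d(a_i,b_i) = \e_i$, so the hypothesis of the satisfaction clause is met; therefore $d^\A(\alpha(t(\ol x)),\alpha(t(\ol y))) \leq \delta$. Finally, since $\alpha$ extends canonically to a homomorphism of $\Omega$-algebras, $\alpha(t(x_1,\ldots,x_n)) = t^\A(a_1,\ldots,a_n)$ and similarly for $\ol y$, giving the desired inequality.

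There is no real obstacle: the corollary is essentially the semantic restatement of Lemma~\ref{Banach}. The only mild subtlety is availability of enough fresh variables in $X$; if one wanted to be fully careful one could either require $|X| \geq 2n$ from the outset (harmless, since the Banach pattern of $t$ depends only on its term structure) or apply (Subst) to rename if there is clash between the term's variables and the chosen $x_i,y_i$. Once that bookkeeping is dispatched, the result drops out.
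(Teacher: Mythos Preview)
Your proposal is correct and matches the intended argument: the paper states this corollary without proof precisely because it is the immediate semantic consequence of Lemma~\ref{Banach} via the satisfaction relation, exactly as you spell out. The bookkeeping about fresh variables is the only point requiring care, and you handle it appropriately.
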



\section{Quantitative Fixed-Point Judgements}

In this section we show how one can add fixed-point operators, which are
essentially second-order constructions, to any Banach theory.
\begin{df}
 Let $\Omega$ be a Banach signature and $X$ a set of variables. The
 \emph{fixed-point extension} of $\Om X$ is the set  
 $$\Om^\mu X=\bigcup_{i\geq 0}\Omega_i,$$ where $\Omega_i$ is defined inductively on $i\geq 0$ as follows:
 \\$\Omega_0=\Om X,$
 \\$\Omega_{k+1}=\{\mu i.f:(n-1):\mu i.\theta~\mid~ f:n:\theta\triangleright i\in\Omega_k\}.$
\end{df}  
Let $\mathcal J(\Om^\mu X)$ be the set of judgements on $\Om^\mu X$, i.e.,
judgements involving quantitative equations between terms in $\Om^\mu X$. In
this way we can speak of quantitative equational theories over $\Om^\mu X$,
respecting the requirements of Definition~\ref{def:QEtheory}.
 
\begin{definition}[Fixed-point extension of Banach theory]
 Given a Banach theory $\U$ over $\Om X$, its \emph{fixed-point extension}
 $\U^\mu$ is the smallest quantitative equational theory over $\Om^\mu X$ that
 contains $\U$ and it is closed under the fixed-point approximation rule
 (Approx) stated below for arbitrary $t,u\in\Om^\mu X$, $\ol s\in (\Om^\mu
 X)^n$, and $\e\geq 0$. 
\begin{align*}
	\text{\textbf{(Approx)}} \quad 
	& \frac{t:n:\theta\triangleright i\in\Om^\mu X}{u=_\e t(\ol s[u/i])\vdash u=_{\frac{\e}{1-a}}(\mu i.t)(\ol s\setminus i)} \,,
\end{align*}
where $a=\max\{\alpha_i\mid \ol\alpha\in\theta\}$.
\end{definition}
Note that since $t:n:\theta\triangleright i$, $a<1$. 

When we take a fixed point, the resulting function of the remaining
arguments may not permit further fixed point operations to be performed.  The
Banach patterns allows us to track exactly when we can and cannot take further
fixed points.

\textbf{Notation.} To simplify the presentation in what follows, it is useful to
adopt a syntactic convention that will allow us to focus on certain variables in
terms with many variables, while treating the rest of them as parameters.  If
$f(\ol x)\in\Om^\mu X$ is a function of arity $n$ with free variables
$\ol x=\ang{x_1..x_n}$, and we need to focus on its $i$-th variable $x_i$, we
write $f\lp x_i \rp$. For instance if $s\in\Om^\mu X$, $f\lp s\rp$ denotes $f(\ol x[s/i])$.
Similarly, if the focus is on two variables, say $x_i,x_j$ for $i<j\leq n$, we
write $f\lp x_i,x_j\rp$. We will use this notation in what follows any time
there is no danger of confusion.  It will allow us to avoid carrying
extra variables around in the syntax.

Given a Banach signature $\Omega$ and a set $X$ of variables, the concept of
iteration of a function on its $i$-th variables, $i\leq n$, can be introduced
for an arbitrary function $f:n\in\Om^\mu X$.  Let $\ol x=\ang{x_1\ldots x_n}\in X^n$
and $s\in\Om^\mu X$. We define:

$$[f]_i^1(\ol x[s/i])=f(\ol x[s/i]),$$
$$[f]_i^{k+1}(\ol x[s/i])=f(\ol x[[f]_i^k(\ol x[s/i])/i]).$$

With the previous notation, we can denote the $k$-th iteration on the $i$-th
variable of $f$ on $s$ by $[f]_i^k\lp s\rp$.  

We conclude this section with two results regarding fixed-point quantitative
theories.  The theorem below encodes, in terms of quantitative equational logic,
the fact that in a Banach theory (we will see later that they are interpreted in
1-bounded complete metric spaces) the sequence of iterations of a function $f$
on its i-th contractive variable, where the function is contractive, is a Cauchy
sequence that has as limit $\mu i.f$. Moreover, and here is the novelty that quantitative setting provides, we can monitor "the speed" of the convergence, and this provides us a powerful tool for building approximation theories.  

\begin{theorem}[Banach]\label{Banach}
	Let $\U$ be a Banach theory over $\Om X$ and $f:n:\theta\triangleright
    i\in\Om^\mu X$. Let $a=\max\{\alpha_i\mid\ol\alpha\in\theta\}$. We focus on
    the i-th variable of $f$, denoted $f\lp x_i\rp$. Then, 
	\\(1). $y=_\e z\vdash [f]_i^k\lp y\rp=_{\e a^k} [f]_i^k\lp z\rp\in\U^\mu$;
	\\(2). $y=_\e f\lp y\rp\vdash [f]_i^k\lp y\rp=_{\e a^k\frac{1-a^l}{1-a}}
    [f]_i^{k+l}\lp y\rp\in\U^\mu$; 
	\\ and for any $s\in\Om^\mu X$ and any $\ol t\in(\Om^\mu X)^n$,
	\\(3). $\forall\e>0~\exists k~\forall m~~\vdash [f]_i^k(\ol
    t[s/i])=_\e[f]_i^{k+m}(\ol t[s/i])\in\U^\mu$; 
	\\(4). $\forall\e>0~\exists k~\forall m~~\vdash [f]_i^{k+m}(\ol
    t[s/i])=_\e\mu i.f(\ol t\setminus i)\in\U^\mu$. 
\end{theorem}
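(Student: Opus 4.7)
The plan is to prove the four parts in order, each feeding into the next. For part (1), I argue by induction on $k$. The base case $k=1$ is an instance of (the $\Om^\mu X$-analogue of) the ``Banach'' lemma stated earlier for $\Om X$: applied to $f\lp y\rp$ versus $f\lp z\rp$ with distance $\e$ in the $i$-th slot and $0$ elsewhere, the bound $\max_{\ol\alpha\in\theta}\sum_j\alpha_j\e_j$ specialises to $\max_{\ol\alpha\in\theta}\alpha_i\cdot\e = a\e$. For the inductive step, $[f]_i^{k+1}\lp y\rp = f\lp [f]_i^k\lp y\rp\rp$ by definition, so invoking the base case with $(y,z)$ instantiated to $([f]_i^k\lp y\rp,[f]_i^k\lp z\rp)$ via (Subst) and discharged against the IH via (Cut) multiplies the error bound by one more factor of $a$. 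For part (2), I telescope: the identity $[f]_i^{k+j+1}\lp y\rp = [f]_i^{k+j}\lp f\lp y\rp\rp$ (a routine unfolding of the iteration definition) lets me apply part (1) to the hypothesis $y =_\e f\lp y\rp$ (with $z := f\lp y\rp$ and exponent $k+j$), yielding $[f]_i^{k+j}\lp y\rp =_{\e a^{k+j}}[f]_i^{k+j+1}\lp y\rp$ for $0\leq j<l$; chaining these together with (Triang) gives total error $\e a^k(1+a+\cdots+a^{l-1}) = \e a^k(1-a^l)/(1-a)$.

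For parts (3) and (4), the hypothesis in (2) is discharged by the (1-bound) axiom and (Subst), producing $\vdash y =_1 f\lp y\rp$; combined with (2) at $\e=1$ via (Cut) this gives the unconditional judgement $\vdash [f]_i^k\lp y\rp =_{a^k(1-a^l)/(1-a)}[f]_i^{k+l}\lp y\rp$. A further (Subst) (sending $y\mapsto s$ and $x_j\mapsto t_j$ for $j\neq i$) transfers this to arbitrary $s,\ol t$; since $a<1$, for any $\e>0$ one chooses $k$ large enough that $a^k/(1-a)<\e$, and (Max) weakens the bound uniformly in $m=l$, giving (3). Part (4) follows by a single application of (Approx). Setting $u := [f]_i^{k+m}(\ol t[s/i])$, one checks directly from the iteration definition that $f(\ol t[u/i]) = [f]_i^{k+m+1}(\ol t[s/i])$, so part (3) with $l=1$ yields $\vdash u =_{a^{k+m}} f(\ol t[u/i])$. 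Applying (Approx) to $f:n:\theta\triangleright i$ with this as premise produces $\vdash u =_{a^{k+m}/(1-a)}\mu i.f(\ol t\setminus i)$, and one again chooses $k$ so that $a^k/(1-a)\leq\e$ to conclude uniformly in $m$.

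The main obstacle is ensuring the earlier ``Banach'' lemma extends cleanly from $\Om X$ to $\Om^\mu X$, since the base case of part (1) applies it to a general $f\in\Om^\mu X$ that may itself contain $\mu$-bindings. This extension requires a structural induction on term formation in $\Om^\mu X$, with a dedicated case for the fixed-point constructor showing that $\mu i.\theta$ is indeed a Banach pattern for $\mu i.f$ (the quantitative content of which is essentially what (4) establishes). Apart from this, the rest of the argument is disciplined bookkeeping with (Subst), (Cut), (Triang), and (Max), together with a single invocation of (Approx) in part (4).
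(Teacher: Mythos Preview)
Your proposal is correct and follows essentially the same route as the paper: (1) by iterating the one-step Banach contraction, (2) by telescoping with (Triang), (3) by discharging the hypothesis of (2) via (1-bound), and (4) by feeding a one-step instance of (3) into (Approx) with $u=[f]_i^{k+m}(\ol t[s/i])$. Your identification of the extension of the Banach lemma from $\Om X$ to $\Om^\mu X$ as the only non-routine point is apt; the paper likewise glosses over this, treating it as folded into the inductive definition of patterns on $\Om^\mu X$ (cf.\ the later Remark that $\Om^\mu$ may itself be regarded as a Banach signature).
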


\begin{proof}
	A consequence of the (Banach) rule is that \\$y=_\e z\vdash f\lp y\rp=_{\e a} f\lp
    z\rp\in\U^\mu$.  We apply this repeatedly to get (1) and use (Triang) to get
    (2). 
	\\ To prove (3), we start from $\vdash y=_1 f\lp y\rp\in\U^\mu$ which we get
    from ($1$-bound) and apply (2) observing that since $a<1$,
    $a^k\frac{1-a^l}{1-a}$ can be made arbitrarly small for any
    $l$ by choosing a sufficiently large $k$.
	\\ For (4) we first observe that from (3) we get that 
	$$\forall\e>0,~\exists k~\forall m,~\vdash[f]_i^{k+m}(\ol t[s/i])=_{\e(1-a)}[f]_i^{k+m+1}(\ol
    t[s/i])\in\U^\mu.$$  We use this in (Approx) instantiated with
    $u=[f]_i^{k+m}(\ol t[s/i])$ and $\e=\frac{\e}{1-a}$ and we get (4). 
\end{proof}

We can talk about convergent sequences in $\Om^\mu X$, in the same way that we
discussed them in quantitative algebras.  Note that the previous theorem
provides an important limit argument: namely the fixed point is obtained as the
limit of the iterates.  This is, of course, how the Banach fixed-point theorem
is supposed to work.  These results show how Banach-style reasoning is
internalized in quantitative logic.

\begin{corollary}\label{mu-limit}
  Let $\U$ be a Banach theory over $\Om X$ and
  $f:n:\theta\triangleright i\in\Om^\mu X$. Then for any $\ol t\in(\Om^\mu X)^n$
  and any $s\in\Om^\mu X$, $([f]_i^k(\ol t[s/i]))_{k\geq 1}$ is a convergent
  sequence in $\U^\mu$ and moreover,
  $$\vdash\lim_k [f]_i^k(\ol t[s/i])=_0\mu i.f(\ol t\setminus i)\in\U^\mu.$$
\end{corollary}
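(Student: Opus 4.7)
The plan is to observe that Corollary \ref{mu-limit} is essentially a reformulation of parts (3) and (4) of Theorem \ref{Banach} in the language of convergent sequences from Section~\ref{quant.eq.th}. I would separate the argument into two short steps: first, establishing that the sequence $([f]_i^k(\ol t[s/i]))_{k\geq 1}$ is convergent in $\U^\mu$ with $\mu i.f(\ol t\setminus i)$ as a limit, and second, identifying this limit with the one denoted $\lim_k [f]_i^k(\ol t[s/i])$.

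For the first step, I would unfold the definition of convergence: a sequence $(s_i)_{i\geq 1}$ is convergent with limit $s$ if for every $\e>0$ there exists $k$ such that for all $j\geq k$, $\vdash s_j=_\e s\in\U^\mu$. Setting $s_j=[f]_i^j(\ol t[s/i])$ and taking as candidate limit $\mu i.f(\ol t\setminus i)$, I simply invoke Theorem \ref{Banach}(4): for every $\e>0$ there exists $k$ such that for all $m$, $\vdash [f]_i^{k+m}(\ol t[s/i])=_\e \mu i.f(\ol t\setminus i)\in\U^\mu$. Reindexing, this is exactly the definition of convergence with $\mu i.f(\ol t\setminus i)$ as limit.

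For the second step, I would appeal to Proposition \ref{limit}, which ensures uniqueness of limits up to $=_0$. Since any limit of the sequence, and in particular the term denoted by the notation $\lim_k [f]_i^k(\ol t[s/i])$, must agree with $\mu i.f(\ol t\setminus i)$ modulo $=_0$, we obtain
\[
\vdash \lim_k [f]_i^k(\ol t[s/i])=_0 \mu i.f(\ol t\setminus i)\in\U^\mu,
\]
as required.

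There is no serious obstacle here: the real work was done in Theorem \ref{Banach}, where the contraction factor $a<1$ derived from $\theta\triangleright i$ together with repeated application of the (Banach) and (Approx) rules yielded the quantitative Cauchy estimate and its comparison with the fixed-point term. The only thing to be careful about is that the notation $\lim_k$ is well-defined, which is precisely what Proposition \ref{limit} guarantees, so the corollary reduces to an unpacking of notation.
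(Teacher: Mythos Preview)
Your proposal is correct and matches the paper's intent: the corollary is stated without proof precisely because it is an immediate repackaging of Theorem~\ref{Banach}(4) in the language of convergent sequences, together with Proposition~\ref{limit} to justify the $\lim_k$ notation. There is nothing to add.
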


The next theorem shows that $\mu i.f$ is indeed the unique parametric fixed
point of $f$ in its $i$-th variable.

\begin{theorem}[Parametric fixed-point]\label{fixed-point}
	Let $\U$ be a Banach theory over $\Om X$ and $f:n:\theta\triangleright
    i\in\Om^\mu X$. Then, for any $s\in\Om^\mu X$ and any $\ol t\in(\Om^\mu
    X)^n$, 
	\\(1). $\vdash\mu i.f(\ol t\setminus i)=_0 f(\ol t[\mu i.f(\ol t\setminus
    i)/i])\in\U^\mu$; 
	\\(2). $s=_0 f(\ol t[s/i])\vdash s=_0 \mu i.f(\ol t\setminus i)\in\U^\mu.$
\end{theorem}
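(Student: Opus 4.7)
The plan is to prove part (2) as a direct instance of the (Approx) rule with $\e = 0$, and then derive part (1) via a uniqueness-of-limits argument combining Corollary~\ref{mu-limit} with the continuity machinery of Lemma~\ref{limit1}.

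For (2), I instantiate (Approx) with $u := s$, $t := f$, $\ol s := \ol t$, and $\e := 0$. The side condition $\e \geq 0$ is met, and $\tfrac{0}{1-a} = 0$, so the conclusion of the rule reads exactly $s =_0 (\mu i.f)(\ol t \setminus i)$. Hence the premise $s =_0 f(\ol t[s/i])$ immediately yields (2).

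For (1), fix any $s \in \Om^\mu X$ (the choice is immaterial; for concreteness one can take $s = x_i$). By Corollary~\ref{mu-limit}, the sequence $([f]_i^k(\ol t[s/i]))_{k \geq 1}$ converges in $\U^\mu$ to $\mu i.f(\ol t \setminus i)$. Unfolding iteration gives $[f]_i^{k+1}(\ol t[s/i]) = f(\ol t[[f]_i^k(\ol t[s/i])/i])$, so the sequence $\bigl(f(\ol t[[f]_i^k(\ol t[s/i])/i])\bigr)_{k \geq 1}$ coincides with $([f]_i^{k+1}(\ol t[s/i]))_{k \geq 1}$, a tail of the original convergent sequence, and hence also converges to $\mu i.f(\ol t \setminus i)$. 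On the other hand, the term $f(\ol t[y/i])$ is non-expansive in $y$, so Lemma~\ref{limit1} shows that this same shifted sequence converges to $f(\ol t[\mu i.f(\ol t \setminus i)/i])$. Uniqueness of limits (Proposition~\ref{limit}) then forces $\vdash \mu i.f(\ol t \setminus i) =_0 f(\ol t[\mu i.f(\ol t \setminus i)/i]) \in \U^\mu$, which is (1).

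The only delicate point is notational: keeping the substitutions $\ol t[\cdot/i]$ aligned when moving between the iterative description $[f]_i^{k+1}(\ol t[s/i])$ and its compositional unfolding $f(\ol t[[f]_i^k(\ol t[s/i])/i])$, and then pushing the limit through the non-expansive slot. Once the unfoldings are performed, both halves of the statement fall out of results already established earlier in the section.
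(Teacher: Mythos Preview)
Your proof is correct and follows essentially the same approach as the paper. For (2) both proofs simply instantiate (Approx) with $\e=0$; for (1) both argue that the iterates $[f]_i^k(\ol t[s/i])$ converge to $\mu i.f(\ol t\setminus i)$ while the shifted sequence $[f]_i^{k+1}(\ol t[s/i]) = f(\ol t[[f]_i^k(\ol t[s/i])/i])$ must, by nonexpansiveness of $f$ in slot $i$, also converge to $f(\ol t[\mu i.f(\ol t\setminus i)/i])$, and then identify the two limits. The only difference is packaging: the paper unrolls this into an explicit $\e/2$--(Triang)--(Cont) argument using Theorem~\ref{Banach} (1) and (4) directly, whereas you invoke the already-stated Corollary~\ref{mu-limit}, Lemma~\ref{limit1}, and Proposition~\ref{limit}, which makes your version slightly more modular but otherwise identical in content.
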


\begin{proof}
	 Let $a=\max\{\alpha_i\mid\ol\alpha\in\theta\}$. 
	\\(1). From Theorem \ref{Banach} (4), $\forall\e>0\;\exists k\;\forall m$, 
	\\$\vdash[f]_i^{k+m-1}(\ol t[s/i])=_{\frac{\e}{2a}}\mu i.f(\ol t\setminus
    i)\in\U^\mu$. And applying Theorem \ref{Banach} (1) to it we get
    \\$\vdash[f]_i^{k+m}(\ol t[s/i])=_{\frac{\e}{2}}f(\ol t[\mu i.f(\ol
    t\setminus i)/i])\in\U^\mu$. 
	\\ On the other hand, Theoren \ref{Banach} (4) also guarantees that
    \\$\vdash[f]_i^{k+m}(\ol t[s/i])=_{\frac{\e}{2}}\mu i.f(\ol t\setminus
    i)\in\U^\mu$. 
	\\We apply (Triang) to the previous two equations and get that for any
    $\e>0$, \\$\vdash\mu i.f(\ol t\setminus i)=_\e f(\ol t[\mu i.f(\ol
    t\setminus i)/i])\in\U^\mu$. Now (Cont) concludes the proof. 
	\\(2). Now we instantiate (Approx) with $u=s$ and $\e=0$.
\end{proof}


\subsection{Semantics of fixed-point judgements}

The fixed-point quantitative theories will be interpreted on quantitative
algebras over $1$-bounded complete metric spaces.  

Let $\Omega$ be a Banach signature and $\CM(\Omega)$ the category of
$\Omega$-quantitative algebras over 1-bounded complete metric spaces. If $\U$ is
a quantitative equational theory over $\Om X$, let $\CM(\U)$ denote the class of
models of $\U$ in $\CM(\Omega)$. 

Let $\A=(A,\Omega,d)\in\CM(\Omega)$, $f:A^n\to A$, $a\in A$ and $i\leq n$. We
define the sequence of iterations of $f$ on $a$ for its $i$-th variable, which is
the family of functions $[f]_i^k:A^{n-1}\to A$ for $k\in\mathbb N$, inductively
as follows, where $\ol x=\ang{x_1..x_n}$ is a sequence of variables 
$$[f]_i^1(\ol x\setminus i)=f(\ol x[a/i]),$$
$$[f]_i^{k+1}(\ol x\setminus i)=f(\ol x[[f]_i^k(\ol x\setminus i)/i]).$$

We know from Banach's fixed-point theorem~\cite{Banach22} that if $f$ is
contractive in its $i$-th variable, then the sequence $([f]_i^k)$ is Cauchy and
has a unique limit, which can be achieved by iterating $f$ on any element of
$A$.  Let $f^*_i:A^{n-1}\to A$ denote this limit; this is a function of the
remaining $n-1$ paramemeters and gives the fixed point in the iterated
position.  

We will use this fact to interpret any fixed-point term in $\A$. Suppose that $t:n:\theta\triangleright i\in\Om^\mu X$. Then, $t^\A:A^n\to A$ is $\max\{\alpha_i\mid\ol\alpha\in\theta\}$-contractive in its i-th variable. Hence, applying Banach Theorem we have that there exists $$[t^\A]^*_i:A_{n-1}\to A.$$
We use this to interpret $\mu i.t$ in $\A$ by defining
$$(\mu i.t)^\A=[t^\A]^*_i.$$
In this way, all the terms in $\Om^\mu X$ can be interpreted in $\A$. And this
allows us to interpret any quantitative equation and any quantitative judgement
by canonically extending the usual definition as follows.

Given an assignment $\iota\in\Omega[X|\A]$ and a tuple $\ol z=\ang{z_1..z_n}\in
X^n$, let $\iota(\ol z)=\ang{\iota(z_1)..\iota(z_n)}$. With this notation, we
extend $\iota$ canonically, from $\Om X$ to $\Om^\mu X$, by letting for any $\mu
i.t(\ol z\setminus i)\in \Om^\mu X$, $$\iota(\mu i.t(\ol z\setminus
i))=[t^\A]_i^*(\ol{\iota(z\setminus i)}).$$ 

\begin{df}[Satisfaction for fixed-point judgements]
  Let $\Omega$ be a Banach signature and $\A\in\CM(\Omega)$.  Let
  $$\{s_i=_{\e_i}t_i\mid i\in I\}\vdash s=_\e t\in\mathcal J(\Om^\mu X).$$
  Then, for $\iota\in\Omega[X|\A]$, we write
  $$\{s_i=_{\e_i}t_i\mid i\in I\}\models_{\A,\iota} s=_\e t,$$ if
	$[\forall i\in I,~d^\A(\iota(t_i),\iota(s_i))\leq \e_i]$
	implies $d^\A(\iota(s),\iota(t))\leq\e$.	
	\\Similarly, for any $\Gamma\vdash\phi\in\mathcal J(\Om^\mu X)$,
	$$\Gamma\models_\A\phi\text{ iff }\forall\iota\in\Omega[X|\A],~\Gamma\models_{\A,\iota}\phi$$
	and for a set $\M\subseteq\CM(\Omega)$,
	$$\Gamma\models_\M\phi\text{ iff }\forall\A\in\M,~\Gamma\models_{\A}\phi.$$
\end{df}

The next theorem states that for a Banach theory the set of its models coincides
with the set of models of its fixed-point extension. 
\begin{theorem}\label{t1}
	Let $\U$ be a Banach theory over $\Om X$ and $\U^\mu$ its fixed-point
    extension. Then, for any $\A\in\CM(\Omega)$,
    $$\A\models\U\text{ iff }\A\models\U^\mu.$$ 
\end{theorem}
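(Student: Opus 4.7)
The $(\Leftarrow)$ direction is immediate from $\U \subseteq \U^\mu$. For $(\Rightarrow)$, assume $\A = (A, \Omega^\A, d^\A) \in \CM(\Omega)$ and $\A \models \U$; the plan is to show, by induction on the closure defining $\U^\mu$, that $\A$ satisfies every judgement in $\U^\mu$.

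The first and main stage is to extend the interpretation from $\Om X$ to $\Om^\mu X$ and prove by structural induction that every $t : n : \theta \in \Om^\mu X$ yields a function $t^\A : A^n \to A$ satisfying its Banach pattern:
$$d^\A(t^\A(\ol a), t^\A(\ol b)) \leq \max_{\ol\alpha \in \theta} \sum_{i \leq n} \alpha_i d^\A(a_i, b_i).$$
For base terms this is the corollary to the Banach-pattern lemma. The critical inductive case is $\mu i.t$ with $t : n : \theta \triangleright i$: the inductive hypothesis tells us $t^\A$ is $a$-contractive in its $i$-th argument with $a = \max\{\alpha_i \mid \ol\alpha \in \theta\} < 1$, and since $(A,d^\A)$ is $1$-bounded complete, Banach's fixed-point theorem produces $[t^\A]^*_i : A^{n-1} \to A$, which I take as $(\mu i.t)^\A$. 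To verify that this function obeys $\mu i.\theta$, I would combine the fixed-point identity with the calculation $(1-\alpha_i) d^\A(x^*(\ol y), x^*(\ol y')) \leq \sum_{j \neq i} \alpha_j d^\A(y_j, y'_j)$ at a witnessing $\ol\alpha \in \theta$, recovering precisely the tuple $\mu i.\ol\alpha \in \mu i.\theta$. The parallel verifications for composition and contraction patterns follow directly from their set-theoretic definitions.

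With the interpretation in hand, the second stage is soundness of the generating axioms and rules of $\U^\mu$. The judgements of $\U$ hold by assumption, and the generic rules of quantitative equational logic transfer from \cite{Mardare16} unchanged, now that every $\mu$-term denotes a genuine element of $A$. The only substantive new case is (Approx): given $t : n : \theta \triangleright i$ and an assignment $\iota$ with $d^\A(\iota(u), \iota(t(\ol s[u/i]))) \leq \e$, write $U = \iota(u)$ and $g(x) = t^\A(\iota(\ol s)[x/i])$. Then $g$ is $a$-contractive and $d^\A(U, g(U)) \leq \e$, so iterating yields $d^\A(U, g^k(U)) \leq \e \sum_{j < k} a^j \leq \e/(1-a)$; passing $k \to \infty$ and using $g^k(U) \to [t^\A]^*_i(\iota(\ol s \setminus i)) = \iota((\mu i.t)(\ol s \setminus i))$ together with continuity of $d^\A$ delivers the required bound. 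The main obstacle is the first stage --- specifically, verifying that the pattern $\mu i.\theta$ really governs the contractivity of the semantic fixed point --- since this is where the Banach-pattern bookkeeping has to earn its keep; once that is settled, soundness of (Approx) is the expected direct application of Banach's theorem, and the remaining rules are routine.
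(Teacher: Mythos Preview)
Your proposal is correct. The structure matches the paper's: the $(\Leftarrow)$ direction is trivial, and for $(\Rightarrow)$ the only nontrivial point is soundness of (Approx). Your first stage --- setting up the semantic interpretation of $\mu$-terms and verifying that $(\mu i.t)^\A$ obeys the pattern $\mu i.\theta$ --- is material that the paper handles \emph{before} stating the theorem, so in the paper's proof it is taken as given rather than re-derived.

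The genuine difference is in how (Approx) is verified semantically. You argue by iteration: from $d(U,g(U))\leq\e$ you bound $d(U,g^k(U))\leq \e\sum_{j<k}a^j\leq \e/(1-a)$ and then pass to the limit $g^k(U)\to m$. The paper instead uses a one-line direct argument: with $m$ the fixed point, the triangle inequality and contractivity give $d(\sigma,m)\leq \e + a\,d(\sigma,m)$, whence $d(\sigma,m)\leq \e/(1-a)$ immediately (this step is licit because the space is $1$-bounded, so $d(\sigma,m)$ is finite). The paper's argument is shorter and avoids any appeal to convergence of iterates or continuity of the metric; your argument is slightly more constructive and makes explicit the link to the iterative content of Banach's theorem. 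Both are perfectly sound.
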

\begin{proof}
  The right-to-left implication follows from $\U\subseteq\U^\mu$.  We prove the
  left-to-right implication as follows. It is sufficient to demonstrate that any
  $\A\in \CM(\Omega)$ satisfies (Approx).\\
  Let $f:n:\theta\triangleright i\in\Om^\mu X$ and
  $a=\max\{\alpha_i\mid\ol\alpha\in\theta\}$.  We need to prove that for any
  $\ol t\in(\Om^\mu X)^n$, any $s\in\Om^\mu X$ and any $\e\geq 0$,
  $s=_\e f(\ol t[s/i])\models_\A s=_{\frac{\e}{1-a}}\mu i.f(\ol t\setminus i)$.
  \\Let $\iota\in\Omega[X|\A]$ and let $\sigma=\iota(s)$ and
  $\ol\tau=\iota(\ol t)$.  \\Suppose that
  $d(\sigma,f(\ol\tau[\sigma/i]))\leq\e$. Let $m=\mu i.f(\ol\tau\setminus i)$.
  Then, $m=f(\ol\tau[m/i])$. We have \\$d(\sigma,m)\leq
  d(\sigma,f(\ol\tau[\sigma/i]))+d(f(\ol\tau[\sigma/i]),m)$
  \\$=d(\sigma,f(\ol\tau[\sigma/i]))+d(f(\ol\tau[\sigma/i]),f(\ol\tau[m/i]))$
  \\$\leq \e+ a d(\sigma,m)$. Hence,
  $d(\sigma,m)\leq\frac{\e}{1-a}$ implying that
  \\$s=_\e f(\ol t[s/i])\models_{\A,\iota} s=_{\frac{\e}{1-a}}\mu i.f(\ol
  t\setminus i)$.  This concludes the proof.
\end{proof}

Hence, the class of models of $\U$ and of $\U^\mu$ coincide in the category of
$1$-bounded complete metric spaces.  

\begin{remark}\label{signature}
  Note that all the terms in $\Om^\mu X$ are nonexpansive in all their variables
  and with well-defined Banach patterns. Consequently, we can think of theses
  terms as elements of a larger Banach signature $\Om^\mu$ that contains all the
  terms as function symbols, and this is a "legal" quantitative algebra
  signature. Similarly, one can think of the fixed-point extension $\U^\mu$ of a
  quantitative equational theory $\U$ over a Banach signature as a 
  quantitative equational theory over the signature $\Om^\mu X$ as originally
  defined in \cite{Mardare16}.
\end{remark}

This remark together with the result of Theorem~\ref{t1} allows us to conclude
this section with a completeness result. 
\begin{theorem}[Completeness for fixed-point theories]\label{completeness'}
	Let $\U$ be a Banach theory over $\Om X$ and $\U^\mu$ its fixed-point extension. Then for any fixed-point quantitative judgement $\Gamma\vdash\phi\in\mathcal J(\Om^\mu X)$,
	$$\Gamma\models_{\CM(\U)}\phi\text{ iff }\Gamma\vdash\phi\in\U^\mu.$$
\end{theorem}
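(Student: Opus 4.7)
The plan is to reduce Theorem~\ref{completeness'} to the classical completeness theorem (Theorem~\ref{completenessQA}) by exploiting Remark~\ref{signature}. That remark lets us regard $\U^\mu$ as an ordinary quantitative equational theory over the enlarged Banach signature $\Om^\mu$, in which every term of $\Om^\mu X$ (in particular every $\mu i.t$) is promoted to a function symbol carrying its inherited Banach pattern. Applying Theorem~\ref{completenessQA} to $\U^\mu$ in this guise gives
$$\Gamma \models_{\Q(\U^\mu)}\phi \iff \Gamma\vdash\phi\in\U^\mu,$$
where $\Q(\U^\mu)$ now ranges over \emph{all} $\Om^\mu$-quantitative algebras satisfying $\U^\mu$, with no completeness restriction.

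To connect this with the desired statement, I would invoke Theorem~\ref{t1} to identify $\CM(\U)$ with the class of $1$-bounded complete models of $\U^\mu$ viewed as $\Om^\mu$-algebras; $1$-boundedness is automatic from the (1-bound) axiom built into the Banach closure. It then suffices to prove
$$\Gamma \models_{\Q(\U^\mu)}\phi \iff \Gamma \models_{\CM(\U)}\phi,$$
whose $(\Rightarrow)$ direction is immediate from the inclusion $\CM(\U)\subseteq\Q(\U^\mu)$.

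The substantive direction $(\Leftarrow)$ is a Cauchy-completion argument, and this is what I expect to be the main obstacle. Given any $\A=(A,\Om^{\mu,\A},d^\A)\in\Q(\U^\mu)$, form its metric completion $\hat\A$: each operation of $\Om^\mu$ is nonexpansive on $\A$, hence extends uniquely and continuously to $\hat\A$ by density, producing a $1$-bounded complete $\Om^\mu$-quantitative algebra that contains $\A$ isometrically. Verifying $\hat\A\models\U^\mu$ proceeds by a standard transfer argument: for any $\hat\iota:X\to\hat A$ and any $\delta>0$, approximate $\hat\iota$ to within $\delta$ on the finitely many variables occurring in a given judgement by an assignment into the dense subspace $A$; nonexpansiveness of all term interpretations propagates the $\e_i$-bounds to within an additive error absorbed by (Triang), and the $\delta\to 0$ passage is handled by (Cont). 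Finally, if some $\A\in\Q(\U^\mu)$ refutes $\Gamma\vdash\phi$ under an assignment $\alpha:X\to A$, composing $\alpha$ with the isometric embedding $A\hookrightarrow\hat A$ yields an assignment refuting $\Gamma\vdash\phi$ on the complete model $\hat\A\in\CM(\U)$; contrapositively, $\CM(\U)$-validity forces $\Q(\U^\mu)$-validity, closing the chain of equivalences.
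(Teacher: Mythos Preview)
Your strategy matches the paper's: use Remark~\ref{signature} to regard $\U^\mu$ as an ordinary quantitative equational theory over the enlarged signature $\Om^\mu$, apply Theorem~\ref{completenessQA}, and finish with Theorem~\ref{t1}. The paper does exactly this in three lines, writing $\CM(\U^\mu)$ directly when it applies Theorem~\ref{completenessQA} rather than $\Q(\U^\mu)$, so it silently absorbs the passage you make explicit.

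Your Cauchy-completion argument for that passage is the natural one, but the sketch needs tightening in one place. The density transfer you outline for $\hat\A\models\U^\mu$ assumes that a judgement mentions only finitely many variables, whereas hypothesis sets $\Gamma$ are allowed to be countably infinite; and the appeal to \textbf{(Triang)} and \textbf{(Cont)} conflates syntactic proof rules with a semantic limiting argument---when you perturb an assignment $\hat\iota$ to land in $A$, the hypotheses of the judgement are no longer satisfied on the nose, so you cannot directly invoke $\A\models\U^\mu$. For the direction you actually need, the refutation-lifting via the isometric embedding $A\hookrightarrow\hat A$ is unproblematic; what remains is showing $\hat\A\in\CM(\U)$. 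A cleaner line there is to observe that the continuously extended $\mu$-symbols on $\hat\A$ necessarily coincide with the canonical Banach fixed points (continuity preserves the fixed-point equation, and in a complete space the fixed point is unique), so it suffices to check $\hat\A\models\U$ on the $\Omega$-reduct and then invoke Theorem~\ref{t1}. That residual check---that the metric completion of a model of $\U$ is again a model---is precisely what the paper is taking for granted from the completeness machinery of~\cite{Mardare16}.
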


\begin{proof}
  Following the Remark \ref{signature}, $\Om^\mu$ is a Banach signature and
  $\U^\mu$ is a quantitative equational theory over $\Om^\mu X$. From the
  completeness result for quantitative algebras, stated in~\ref{completenessQA}
  and proven in \cite{Mardare16}, we get that
		$$\Gamma\models_{\CM(\U^\mu)}\phi\text{ iff }\Gamma\vdash\phi\in\U^\mu.$$
        Applying Theorem~\ref{t1}, which asserts that
        $\CM(\U)=\CM(\U^\mu)$, we conclude the proof.
\end{proof}


\section{Quantitative Fixed-Point Theories}

In this section we investigate the relation between the fixed-point extension of
Banach theories and the traditional concepts of Conway theories and iteration
theories~\cite{Bloom93,Simpson00}. 

\textbf{Notation.} In what follows, for a term $f:n:\theta\triangleright i$
and a sequence $\ol x=\ang{x_1..x_n}$ of variables, we will also use the usual
variable-binding fixed-point syntax and write $\mu x_i.f(\ol x)$ to denote $\mu
i.f(\ol x\setminus i)$.  This notation allows us to present a series of 
results in a more familiar format.

\subsection{Quantitative Conway theories}

The Conway theories \cite{Bloom93,Simpson00}, are defined by two properties Dinaturality and Diagonal property. We prove here that quantitative versions of these can be proven in any fixed-point Banach theory.

\begin{lemma}\label{l5}
	Let $\U$ be a Banach theory over $\Om X$ and let $f:p:\theta\triangleright
    i,~g:q:\zeta\triangleright j\in\Om^\mu X$. We focus on the $i$-th variable of
    $f$, $f\lp x_i\rp$ and on the $j$-th variable of $g$, $g\lp y_j\rp$.   
	\\Then, for any $s\in\Om^\mu X$ we have that $\forall\e>0~\exists n~\forall m$, 
	$$\vdash [f\lp g\lp y_j\rp\rp]_{i+j-1}^{n+m}(s)=_\e f\lp[g\lp f\lp
    x_i\rp\rp]_{i+j-1}^{n+m}(s)\rp\in\U^\mu.$$ 
\end{lemma}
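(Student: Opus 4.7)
The plan is to show that both iterated sequences converge, inside $\U^\mu$, to provably-equal limits; the required approximate equality at index $n+m$ then follows from a triangle-inequality argument.

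First, using the composition rule for Banach patterns I verify that $f\lp g\lp y_j\rp\rp$ and $g\lp f\lp x_i\rp\rp$ are well-formed terms of $\Om^\mu X$. In the naturally concatenated list of free variables, $y_j$ sits at position $(i-1)+j = i+j-1$ in the first composite, and $x_i$ at position $(j-1)+i = i+j-1$ in the second; in both cases the coefficient on that distinguished variable in any element of the resulting pattern is bounded above by $ab$, where $a = \max\{\alpha_i \mid \ol\alpha \in \theta\} < 1$ and $b = \max\{\beta_j \mid \ol\beta \in \zeta\} < 1$. Both composites are therefore strictly contractive at position $i+j-1$, so the fixed-point terms
$$
L_1 \;:=\; \mu y_j.\, f\lp g\lp y_j\rp\rp, \qquad L_2 \;:=\; \mu x_i.\, g\lp f\lp x_i\rp\rp
$$
are legal elements of $\Om^\mu X$.

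Second, I prove the dinaturality-style identity $\vdash f\lp L_2\rp =_0 L_1 \in \U^\mu$. Theorem~\ref{fixed-point}(1) applied to $g\lp f\lp x_i\rp\rp$ gives $\vdash L_2 =_0 g\lp f\lp L_2\rp\rp \in \U^\mu$. Propagating this equality through the $i$-th slot of $f$ via (NExp), using (Refl) on the remaining slots, yields $\vdash f\lp L_2\rp =_0 f\lp g\lp f\lp L_2\rp\rp\rp \in \U^\mu$. Setting $u := f\lp L_2\rp$, this is exactly $u =_0 f\lp g\lp u\rp\rp$, so $u$ is a $=_0$-fixed point of the contractive term $f\lp g\lp y_j\rp\rp$ in its $y_j$-slot. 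Theorem~\ref{fixed-point}(2) then forces $\vdash u =_0 L_1 \in \U^\mu$.

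Third, I combine the two convergences. By Corollary~\ref{mu-limit}, the sequence $[f\lp g\lp y_j\rp\rp]_{i+j-1}^{k}(s)$ converges in $\U^\mu$ to $L_1$, and $[g\lp f\lp x_i\rp\rp]_{i+j-1}^{k}(s)$ converges to $L_2$. Lemma~\ref{limit1} propagates the latter convergence through the $i$-th slot of $f$, so the sequence $f\lp [g\lp f\lp x_i\rp\rp]_{i+j-1}^{k}(s)\rp$ converges in $\U^\mu$ to $f\lp L_2\rp$. Given $\e > 0$, pick $n$ large enough that for every $k \geq n$ each of these two sequences is within $\e/3$ of its limit; combined with $\vdash L_1 =_{\e/3} f\lp L_2\rp \in \U^\mu$, which follows from Step~2 via (Max), and two applications of (Triang) together with (Symm) as needed, this yields the stated $=_\e$-equation at every index $n+m$.

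The main obstacle is the Step~1 bookkeeping: one has to check carefully, via the composition rule for Banach patterns, that both composite terms end up contractive at exactly the same position $i+j-1$ with contraction rate strictly below $1$, so that the machinery of Corollary~\ref{mu-limit} and Theorem~\ref{fixed-point} applies uniformly and to the correct variable. Beyond that, the argument is the standard ``two sequences, one common limit'' triangle inequality, internalised inside $\U^\mu$ by means of (Triang), (Symm), (Max) and (NExp).
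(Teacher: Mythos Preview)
Your proof is correct, but it takes a genuinely different route from the paper's.

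The paper's argument is a direct, elementary calculation: starting from $\vdash s =_1 f\lp s\rp$ (via ($1$-bound)) and applying (Banach) twice, one gets $\vdash f\lp g\lp s\rp\rp =_{ab} f\lp g\lp f\lp s\rp\rp\rp$, i.e.\ the two expressions at iteration level $1$ are $ab$-close. Iterating gives
\[
\vdash [f\lp g\lp y_j\rp\rp]_{i+j-1}^{n}(s) \; =_{(ab)^n} \; f\lp[g\lp f\lp x_i\rp\rp]_{i+j-1}^{n}(s)\rp,
\]
and since $ab<1$ the bound $(ab)^n$ can be made arbitrarily small. No fixed-point terms are ever invoked.

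Your argument instead proves the dinaturality identity $\vdash f\lp L_2\rp =_0 L_1$ first, using only the fixed-point existence and uniqueness results (Theorem~\ref{fixed-point}), and then recovers the approximate equality by a two-sequences-one-limit triangle argument via Corollary~\ref{mu-limit} and Lemma~\ref{limit1}. This is sound (those results are all established prior to, and independently of, Lemma~\ref{l5}, so there is no circularity), but it is heavier machinery and does not yield the explicit geometric rate $(ab)^n$. More interestingly, since in the paper Lemma~\ref{l5} is precisely the stepping stone towards Theorem~\ref{dinaturality}, your Step~2 shows that---under the stronger hypotheses $\theta\triangleright i$ and $\zeta\triangleright j$ assumed here---dinaturality follows directly from fixed-point uniqueness, effectively reversing the paper's order of dependence.
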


\begin{proof}
  Let $a=\max\{\alpha_i\mid\ol\alpha\in\theta\}$,
  $b=\max\{\beta_j\mid\ol\beta\in\zeta\}$.  From ($1$-bound) we have
  $\vdash s=_1 f\lp s\rp\in\U^\mu$ and applying (Banach) to this, we conclude
  $\vdash g\lp s\rp=_b g\lp f\lp s\rp\rp\in\U^\mu$.  We again apply (Banach) to
  this last equation and get
  $\vdash f\lp g\lp s\rp\rp=_{ab}f\lp g\lp f\lp s\rp\rp\rp\in\U^\mu$. Hence,
  \\$\vdash [f\lp g\lp y_j\rp\rp]_{i+j-1}^1(s)=_{ab} f\lp[g\lp f\lp
  x_i\rp\rp]_{i+j-1}^1(s)\rp\in\U^\mu.$  Repeating these steps we obtain
  \\$\vdash [f\lp g\lp y_j\rp\rp]_{i+j-1}^n(s)=_{(ab)^n} f\lp[g\lp f\lp
  x_i\rp\rp]_{i+j-1}^n(s)\rp\in\U^\mu.$ \\Since $ab<1$, we can make $(ab)^n$ as
  small as we want.
\end{proof}

This lemma allows us to prove a quantitative version of the Dinaturality
property~\cite{Simpson00}. 
\begin{theorem}[Quantitative Dinaturality]\label{dinaturality}
	Let $\U$ be a Banach theory over $\Om X$ and $f:n:\theta$,
    $g:m:\zeta\in\Om^\mu X$ such that $\theta[\zeta/i]$ and $\zeta[\theta/j]$
    are $i+j-1$-contractive.  We focus on the i-th variable of $f$, $f\lp
    x_i\rp$, and on j-th variable of $g$, $g\lp y_j\rp$. 
	Then, $$\vdash\mu y_j.f\lp g\lp y_j\rp\rp=_0 f\lp \mu x_i. g\lp f\lp x_i\rp\rp\rp\in\U^\mu.$$
\end{theorem}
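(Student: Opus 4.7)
The plan is to realise both $\mu y_j.f\lp g\lp y_j\rp\rp$ and $f\lp \mu x_i. g\lp f\lp x_i\rp\rp\rp$ as limits (in $\U^\mu$) of two sequences of iterates of the composite terms, which Lemma \ref{l5} already tells us are asymptotically $\e$-close for every $\e>0$; the conclusion then follows by (Triang) and (Cont).

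First I would pick an arbitrary $s \in \Om^\mu X$. The composite $f\lp g\lp y_j\rp\rp$ has arity $n+m-1$ and its Banach pattern $\theta[\zeta/i]$ is $(i{+}j{-}1)$-contractive by hypothesis (position $i{+}j{-}1$ is where $y_j$ sits after the substitution). So Corollary \ref{mu-limit} applies and yields
$$\vdash \lim_k [f\lp g\lp y_j\rp\rp]_{i+j-1}^k\lp s\rp =_0 \mu y_j.f\lp g\lp y_j\rp\rp \in \U^\mu.$$
Symmetrically, $g\lp f\lp x_i\rp\rp$ has Banach pattern $\zeta[\theta/j]$, $(i{+}j{-}1)$-contractive by hypothesis, and Corollary \ref{mu-limit} gives
$$\vdash \lim_k [g\lp f\lp x_i\rp\rp]_{i+j-1}^k\lp s\rp =_0 \mu x_i.g\lp f\lp x_i\rp\rp \in \U^\mu.$$
Since $f\lp\cdot\rp$ is nonexpansive in its $i$-th slot, Lemma \ref{limit1} lets me push $f$ through the limit to get
$$\vdash \lim_k f\lp [g\lp f\lp x_i\rp\rp]_{i+j-1}^k\lp s\rp\rp =_0 f\lp \mu x_i.g\lp f\lp x_i\rp\rp\rp \in \U^\mu.$$

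Now I would invoke Lemma \ref{l5}: for every $\e>0$ there exists $N$ such that for all $k \geq N$,
$$\vdash [f\lp g\lp y_j\rp\rp]_{i+j-1}^k\lp s\rp =_\e f\lp [g\lp f\lp x_i\rp\rp]_{i+j-1}^k\lp s\rp\rp \in \U^\mu.$$
Given a target $\e>0$, I pick $k$ large enough that both limit approximations above hold with precision $\e/3$, and apply Lemma \ref{l5} with bound $\e/3$ as well. Two applications of (Triang) then give
$$\vdash \mu y_j.f\lp g\lp y_j\rp\rp =_\e f\lp \mu x_i. g\lp f\lp x_i\rp\rp\rp \in \U^\mu,$$
and since $\e>0$ is arbitrary, (Cont) closes the argument to $=_0$.

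The main obstacle is essentially book-keeping: correctly locating the bound variable at position $i{+}j{-}1$ of the composite terms, and checking that the two contractivity hypotheses on $\theta[\zeta/i]$ and $\zeta[\theta/j]$ are exactly what Corollary \ref{mu-limit} demands so that both iterative sequences converge to the intended fixed points. All the genuine content --- the Banach-style matching of the two iterates --- was already absorbed into Lemma \ref{l5}, so the present argument is a routine convergence/triangle-inequality computation, internalised in quantitative equational logic by the rules (NExp), (Triang) and (Cont).
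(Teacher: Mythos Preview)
Your proposal is correct and follows essentially the same $\e/3$ strategy as the paper: use Lemma~\ref{l5} for one third, convergence of the iterates of $f\lp g\lp\cdot\rp\rp$ to its fixed point for another third, and convergence of the iterates of $g\lp f\lp\cdot\rp\rp$ pushed through $f$ for the last third, then finish with (Triang) and (Cont). The only cosmetic difference is that the paper invokes Theorem~\ref{Banach}(4) directly and uses the contractiveness factor $a$ of $f$ (taking the inner bound to be $\e/(3a)$ so that after applying $f$ it becomes $\e/3$), whereas you package the same steps via Corollary~\ref{mu-limit} and Lemma~\ref{limit1} using only nonexpansiveness of $f$; both routes work.
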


\begin{proof}
	Let $a=\max\{\alpha_i\mid\ol\alpha\in\theta\}$.
	\\	By using Lemma~\ref{l5} and Theorem~\ref{Banach} (4) together, we obtain
    $\forall \e>0~\exists n~\forall m$ such that the following three statements
    are satisfied. 
	\\$\vdash[f\lp g\lp y_j\rp\rp]_{i+j-1}^{n+m}(s)=_{\frac{\e}{3}}f\lp[g\lp f\lp x_i\rp\rp]_{i+j-1}^{n+m}(s)\rp\in\U^\mu,$
	\\$\vdash[f\lp g\lp y_j\rp\rp]_{i+j-1}^{m+n}(s)=_{\frac{\e}{3}}\mu y_j. f\lp g\lp y_j\rp\rp\in\U^\mu$ and 
	\\$\vdash [g\lp f\lp x_i\rp\rp]_{i+j-1}^{m+n}(s)=_{\frac{\e}{3a}}\mu x_i.g\lp f\lp x_i\rp\rp\in\U^\mu$. 
	\\The last one implies 
	\\$\vdash f\lp[g\lp f\lp x_i\rp\rp]_{i+j-1}^{m+n}(s)\rp=_{\frac{\e}{3}}f\lp\mu x_i.g\lp f\lp x_i\rp\rp\rp\in\U^\mu$.
	\\Using this one and the first two with (Triang) we conclude that for any $\e>0$,
	$$\vdash\mu y_j.f\lp g\lp y_j\rp\rp=_\e f\lp \mu x_i. g\lp f\lp
    x_i\rp\rp\rp\in\U^\mu.$$ Now we apply (Cont) and complete the proof. 
\end{proof}
This type of ``$\epsilon/3$-argument'' is common in analysis.

With these results in hand we can proceed and prove a quantitative version of
the diagonal property for fixed-point theories.

\begin{theorem}[Quantitative Diagonal property]\label{diagonal}
  Let $\U$ be a Banach theory over $\Om X$ and $f:n:\theta\in \Om^\mu
  X$. Suppose there exists $i<j\leq n$ s.t. for any $\ol\alpha\in\theta$,
  $\alpha_i+\alpha_j<1$. We focus on the i-th and j-th variables of $f$,
  $f\lp x_i,x_j\rp$. Then,
	$$\vdash\mu x_i.f\lp x_i,x_i\rp=_0 \mu x_j.\mu x_i.f\lp x_i,x_j\rp\in\U^\mu.$$
\end{theorem}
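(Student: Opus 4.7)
The plan is to derive the diagonal property from the uniqueness of parametric fixed points (Theorem~\ref{fixed-point}(2)) together with two invocations of the fixed-point equation (Theorem~\ref{fixed-point}(1)), closely paralleling the classical Conway-theoretic argument but carried out inside $\U^\mu$. Abbreviate the inner fixed point as $g(x_j)\deq\mu x_i.f\lp x_i,x_j\rp$, the right-hand side as $p\deq\mu x_j.g(x_j)$, the diagonalised term as $h(x_i)\deq f\lp x_i,x_i\rp$, and the left-hand side as $m\deq\mu x_i.h(x_i)$. The first thing I would do is audit the Banach patterns. Since $\alpha_j\geq 0$, the hypothesis $\alpha_i+\alpha_j<1$ immediately gives $\alpha_i<1$ for every $\ol\alpha\in\theta$, so $\theta\triangleright i$ and $g$ is well-defined with pattern $\mu i.\theta$. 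The $x_j$-coordinate of $\mu i.\ol\alpha$ equals $\alpha_j/(1-\alpha_i)$, which is strictly below $1$ precisely because $\alpha_i+\alpha_j<1$, so $\mu i.\theta$ is contractive in its $x_j$-slot and $p$ is well-defined. Finally $h$ has pattern $\theta[i<j]$, whose $i$-th coordinate is $\alpha_i+\alpha_j<1$, so $m$ is well-defined as well.

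The heart of the proof is to derive $\vdash p =_0 f\lp p,p\rp\in\U^\mu$. Applying Theorem~\ref{fixed-point}(1) to $g$ gives $\vdash g(x_j)=_0 f\lp g(x_j),x_j\rp$; then (Subst) with $x_j\mapsto p$ yields $\vdash g(p)=_0 f\lp g(p),p\rp$. A second application of Theorem~\ref{fixed-point}(1), this time to $p=\mu x_j.g(x_j)$, yields $\vdash p=_0 g(p)$. Feeding this equation into the first argument of $f\lp\cdot,p\rp$ via (NExp) produces $\vdash f\lp p,p\rp=_0 f\lp g(p),p\rp$, and chaining the three resulting $=_0$-judgements with (Symm) and (Triang) delivers $\vdash p=_0 f\lp p,p\rp$.

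To conclude, I would apply the uniqueness part of the parametric fixed-point theorem, Theorem~\ref{fixed-point}(2), to the term $h$: since the pattern $\theta[i<j]$ is $i$-contractive and we have just established $\vdash p=_0 h\lp p\rp$, this theorem yields $\vdash p=_0 \mu x_i.h(x_i)$, which is precisely $\vdash\mu x_j.\mu x_i.f\lp x_i,x_j\rp =_0 \mu x_i.f\lp x_i,x_i\rp$. The principal subtlety I expect is not the deductive chain, which is short, but the bookkeeping of Banach patterns: each of the three $\mu$-abstractions depends on a strict inequality in a specific coordinate, and the notation $f\lp x_i,x_j\rp$ hides the other $n-2$ parameter slots, so one must track carefully how $\theta$ transforms under iteration, contraction, and substitution and confirm at each step that the single hypothesis $\alpha_i+\alpha_j<1$ suffices.
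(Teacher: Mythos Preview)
Your proposal is correct, and it takes a genuinely different route from the paper's own proof.

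The paper works from the opposite side: it fixes $s=\mu x_i.f\lp x_i,x_i\rp$ and an auxiliary one-variable term $t\lp x_i\rp=\mu x_j.f\lp x_i,x_j\rp$, and uses (Banach) together with the two fixed-point equations $s=_0 f\lp s,s\rp$ and $t\lp x_i\rp=_0 f\lp x_i,t\lp x_i\rp\rp$ to derive the self-improving implication $t\lp s\rp=_\delta s\vdash t\lp s\rp=_{a\delta}s$. Starting from the trivial bound supplied by (1-bound) and iterating, (Cont) then yields $t\lp s\rp=_0 s$. Your argument, by contrast, unfolds the fixed-point equations for $g$ and for $p=\mu x_j.g(x_j)$ to obtain $p=_0 f\lp p,p\rp$ directly, and then closes with a single appeal to the uniqueness clause Theorem~\ref{fixed-point}(2).

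What each approach buys: yours is the standard Conway-theoretic proof transplanted into $\U^\mu$; it is short, and it makes clear that once Theorem~\ref{fixed-point} is in hand no further quantitative reasoning is needed. The paper's route is more overtly ``Banach'': it exhibits the geometric contraction explicitly rather than hiding it inside the uniqueness principle, in keeping with the paper's emphasis on tracking rates. Your careful audit of the Banach patterns (checking $\theta\triangleright i$, then that $\mu i.\theta$ is contractive in the $x_j$-slot via $\alpha_j/(1-\alpha_i)<1$, and finally that $\theta[i<j]\triangleright i$) is a useful addition that the paper leaves implicit.
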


\begin{proof}
	Let $s=\mu x_i.f\lp x_i,x_i\rp$ and $t\lp x_i\rp=\mu x_j.f\lp x_i,x_j\rp$. Theorem \ref{fixed-point} (1) guarantees that \\$\vdash s=_0 t\lp s,s\rp\in\U^\mu$ and 
	$\vdash t\lp x_i\rp=_0 f\lp x_i,t\lp x_i\rp\rp\in\U^\mu.$
	\\Let $a=\max\{\alpha_i,\alpha_j\mid\ol\alpha\in\theta\}$. Then applying (Banach),
	\\$x_i=_\e s\vdash f\lp x_i,t\lp x_i\rp\rp=_{a\e}f\lp s,t\lp x_i\rp\rp\in\U^\mu$ and
	\\$t\lp x_i\rp=_\delta s\vdash f\lp s,t\lp x_i\rp\rp=_{a\delta} f\lp s,s\rp\in\U^\mu$. We apply (Triang) and obtain
	\\$\{x_i=_\e s,~t\lp x_i\rp=_\delta s\}\vdash f\lp x_i,t\lp x_i\rp\rp =_{a(\e+\delta)}f\lp s,s\rp\in\U^\mu$. Now if we instantiate this with $x_i=s$, we get
	\\$t\lp s\rp=_\delta s\vdash f\lp s,t\lp s\rp\rp =_{a\delta}f\lp s,s\rp\in\U^\mu$.
	We already know that \\$\vdash s=_0 t\lp s,s\rp\in\U^\mu$ and 
	$\vdash t\lp x_i\rp=_0 f\lp x_i,t\lp x_i\rp\rp\in\U^\mu.$ Combining these three, we obtain
	\\$t\lp s\rp=_\delta s\vdash t\lp s\rp=_{a\delta} s\in\U^\mu$. By applying
    this repeatedly and eventually using (Cont), we obtain $$\vdash t\lp s\rp=_0
    s\in\U^\mu,$$  which is the desired result.
\end{proof}

Theorems \ref{dinaturality} and \ref{diagonal} establish that the fixed-point
extension of any Banach theory is a Conway theory, in the sense of
\cite{Simpson00}.  


\subsection{Quantitative Iteration Theories}

In this subsection we show that the fixed-point Banach theories are not only
Conway theories, but they are iteration theories in the sense of
\cite{Simpson00}; meaning that, in addition to quantitative dinaturality and the
quantitative diagonal property, they also satisfy a quantitative version of the
amalgamation property.

\begin{lemma}\label{ovelap-mu}
	Let $\U$ be a Banach theory over $\Om X$ and $f:n:\theta,
    g:m:\zeta\in\Om^\mu X$ such that there exist $i<j\leq n$ and $u<v\leq m$
    with $a_i(1-b_v)+b_u<1$ and $a_i+2a_jb_v<1$, where
    $a_i=\max\{\alpha_i\mid\ol\alpha\in\theta\}$,
    $a_j=\max\{\alpha_j\mid\ol\alpha\in\theta\}$,
    $b_u=\max\{\beta_u\mid\ol\beta\in\zeta\}$ and
    $b_v=\max\{\beta_v\mid\ol\beta\in\zeta\}$.  We focus on the variables $i$ and $j$ in $f$, $f\lp x_i,x_j\rp$ and on variables $u$ and $v$ in $g$, $g\lp y_u,y_v\rp$. Then,
	$$\vdash\mu x.f\lp x, \mu y. g\lp x,y\rp\rp=_0 \mu x.\mu y. f\lp x, g\lp x,y\rp\rp\in\U^\mu.$$
\end{lemma}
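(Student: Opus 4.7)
The plan is to prove the identity by reducing both sides to the same parametric fixed-point equation and then appealing to the uniqueness clause of Theorem~\ref{fixed-point}(2). The target characterization is that both sides are the $x$-component of the simultaneous fixed point of the system $\{x = f\lp x, y\rp,\, y = g\lp x, y\rp\}$, expressed inside $\U^\mu$ as a nested parametric fixed point.

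For the left-hand side, let $p := \mu x.f\lp x, \mu y.g\lp x, y\rp\rp$. Applying Theorem~\ref{fixed-point}(1) to the outer $\mu x$ gives $\vdash p =_0 f\lp p, \mu y.g\lp p, y\rp\rp \in \U^\mu$. Setting $q := \mu y.g\lp p, y\rp$ and invoking Theorem~\ref{fixed-point}(1) once more yields $\vdash q =_0 g\lp p, q\rp \in \U^\mu$. Hence $(p, q)$ solves the simultaneous system inside $\U^\mu$. The hypothesis $a_i(1 - b_v) + b_u < 1$ is precisely what is needed for these nested fixed points to be well-defined: $\mu y.g\lp x, y\rp$ has contractivity $b_u/(1-b_v)$ in $x$, and the outer $\mu x$ is well-defined because $a_i + a_j \cdot b_u/(1 - b_v) < 1$.

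For the right-hand side, let $p' := \mu x.\mu y.f\lp x, g\lp x, y\rp\rp$. Applying Theorem~\ref{fixed-point}(1) to the outer $\mu x$ yields $\vdash p' =_0 \mu y.f\lp p', g\lp p', y\rp\rp \in \U^\mu$. I would then rearrange the inner $\mu y$ into the same simultaneous-system shape as for $p$: use the diagonal property (Theorem~\ref{diagonal}) in reverse to split the two occurrences of $x$ in $f\lp x, g\lp x, y\rp\rp$ into separate positions $x_1, x_2$, and then apply dinaturality (Theorem~\ref{dinaturality}) to slide the inner $\mu y$ past $f$ so that $\mu y$ effectively acts only on $g\lp p', y\rp$. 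The result should be $q' := \mu y.g\lp p', y\rp$ satisfying $\vdash p' =_0 f\lp p', q'\rp$ and $\vdash q' =_0 g\lp p', q'\rp$ in $\U^\mu$. The second hypothesis $a_i + 2a_jb_v < 1$ furnishes exactly the contractivity needed in this rearrangement: the factor $2$ reflects that the $y$-position is counted both directly (through $g$) and indirectly (through $f\circ g$) during the inverse-diagonal splitting and subsequent reshuffling.

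Once both sides satisfy $\vdash s =_0 f\lp s, \mu y.g\lp s, y\rp\rp \in \U^\mu$, we invoke Theorem~\ref{fixed-point}(2), the uniqueness of the parametric fixed point, to conclude $\vdash p =_0 p' \in \U^\mu$. The hard part of the proof is the middle step: commuting the inner $\mu y$ past the outer $f$ on the right-hand side while keeping all intermediate Banach patterns strictly contractive. I expect this to mirror the $\e/3$-style argument used in the proof of Theorem~\ref{dinaturality}, iterating both terms via Theorem~\ref{Banach} and then passing to the limit with (Cont); once that rearrangement is established, the rest is routine bookkeeping via (Triang), (NExp), and the (Approx) rule.
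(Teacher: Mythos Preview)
Your route through the Conway identities does not deliver the step you need. Dinaturality (Theorem~\ref{dinaturality}) gives $\mu y.\,F(G(y)) =_0 F(\mu z.\,G(F(z)))$, not $F(\mu y.\,G(y))$: applied with $F(z)=f\lp p',z\rp$ and $G(y)=g\lp p',y\rp$ you obtain
\[
\mu y.\,f\lp p',g\lp p',y\rp\rp \;=_0\; f\lp p',\,\mu z.\,g\lp p',f\lp p',z\rp\rp\rp,
\]
and the inner fixed point is $\mu z.\,g\lp p',f\lp p',z\rp\rp$, not $\mu y.\,g\lp p',y\rp$. Splitting the two occurrences of $x$ beforehand via the diagonal property (Theorem~\ref{diagonal}) does not repair this; after the split you still land on an inner term of the shape $\mu z.\,g\lp\cdots,f\lp\cdots,z\rp\rp$. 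So the rearrangement ``slide $\mu y$ past $f$'' is precisely the content of the lemma and is not a consequence of the Conway identities already available. Your acknowledged fallback to an $\varepsilon/3$ iteration argument is therefore doing all the real work, and the surrounding fixed-point-uniqueness scaffolding (Theorem~\ref{fixed-point}(2)) is idle.

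The paper's proof is much more direct and never touches dinaturality, the diagonal property, or uniqueness. After noting that the two numerical hypotheses make all the nested $\mu$'s well-defined, it appeals only to Lemma~\ref{limit1}(2) (nonexpansive terms commute with limits in $\U^\mu$) together with Corollary~\ref{mu-limit} (every $\mu$ is the limit of its iterate sequence). The single displayed identity is the limit interchange
\[
\vdash\;\lim_k f\lp s_k,\,\lim_r g\lp s_k,t_r\rp\rp \;=_0\; \lim_k\lim_r f\lp s_k,\,g\lp s_k,t_r\rp\rp,
\]
obtained by iterating Lemma~\ref{limit1}(2), after which both sides of the lemma are identified as the same double limit of iterate sequences. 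That is essentially the $\varepsilon/3$ argument you anticipate for your ``hard step'', but carried out once and for all at the level of limits rather than embedded inside a larger uniqueness argument.
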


\begin{proof}
	The inequalities $a_i(1-b_v)+b_u<1$ and $a_i+2a_jb_v<1$ guarantee that the
    fixed-points are properly defined. 
	
	Observe now that by repeatedly applying~\ref{limit1} (2) we can prove that
    for any two sequences $(s_k)_{k\geq 1}\subseteq\Om^\mu X$ and
    $(t_r)_{r\geq 1}\subseteq\Om^\mu X$ convergent in $\displaystyle\U^\mu$,
    $(\lim_{r}f\lp s_k, g\lp s_k,t_r\rp\rp)_{k\geq 1}$ and
    $\displaystyle(f\lp s_k, \lim_{r} g\lp s_k,t_r\rp\rp)_{k\geq 1}$ are
    convergent in $\U^\mu$ and moreover,
	$$\vdash \lim_k f\lp s_k, \lim_{r} g\lp s_k,t_r\rp\rp =_0
    \lim_k(\lim_{r}f\lp s_k, g\lp s_k,t_r\rp\rp) \in\U^\mu.$$ 
	Applying this in the context of Corollary \ref{mu-limit}, we get the desired
    result.
\end{proof}

With the result of the previous lemma, we are ready to state the
quantitative amalgamation theorem. 

\begin{theorem}[Quantitative amalgamation]\label{amalgamation}
	Let $\U$ be a Banach theory over $\Om X$ and let $f_i:n:\theta^i\in\Om^\mu X$
    for $i\leq n$ be a family of functions such that for any $i\neq j$ and any
    $\ol\alpha^i\in\theta^i$ and $\ol\alpha^j\in\theta^j$, $$\sum_{k\leq
      n}\alpha^i_k=\sum_{k\leq n}\alpha^j_k=\alpha<1.$$ 
	Suppose there exists $g:1:\{\ang{\alpha}\}\in\Om^\mu X$ s.t. for all $i\leq n$,
	$$\vdash f_i(x..x)=_0 g(x)\in\U^\mu.$$
	If there exists $s_1,..,s_n\in\Om^\mu X$ s.t. for all $i\leq n$,
	$$\vdash s_i=_0 f_i(s_1..s_n)\in\U^\mu,$$
	then for all $i\leq n$, $$\vdash s_i=_0\mu x.g(x)\in\U^\mu.$$ 
\end{theorem}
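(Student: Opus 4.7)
The plan is to mimic the standard Banach fixed-point uniqueness argument inside the quantitative logic. Setting $m := \mu x.g(x)$, I aim to show by induction on $k \geq 0$ that
$$\vdash s_i =_{\alpha^k} m \in \U^\mu \text{ for every } i \leq n,$$
and then invoke (Cont) together with $\alpha^k \to 0$ (which holds because $\alpha < 1$) to conclude $\vdash s_i =_0 m \in \U^\mu$. Here $\alpha^k$ denotes the $k$-th power of the scalar $\alpha$.

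First I set up two auxiliary facts. Applying Theorem \ref{fixed-point}(1) to the $\alpha$-contractive unary $g$ gives $\vdash m =_0 g(m) \in \U^\mu$. Combining the hypothesis $\vdash f_i(x,\ldots,x) =_0 g(x) \in \U^\mu$ via (Subst) with $x \mapsto m$, and then one use of (Triang) with the previous equation, gives $\vdash f_i(m,\ldots,m) =_0 m \in \U^\mu$ for every $i \leq n$.

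The base case $k = 0$ is immediate from the ($1$-bound) axiom, since $\alpha^0 = 1$. For the inductive step, suppose $\vdash s_j =_{\alpha^k} m \in \U^\mu$ for every $j \leq n$. Since $f_i : n : \theta^i \in \Om^\mu X$, the (Banach) rule extended to compound terms (Lemma \ref{Banach}, with Remark \ref{signature} licensing its use on $\Om^\mu X$ by treating $\Om^\mu$ itself as a Banach signature) yields
$$\{s_j =_{\alpha^k} m \mid j \leq n\} \vdash f_i(s_1,\ldots,s_n) =_{\alpha^{k+1}} f_i(m,\ldots,m) \in \U^\mu,$$
because $\max_{\ol\beta \in \theta^i} \sum_{j \leq n} \beta_j \cdot \alpha^k = \alpha \cdot \alpha^k$ by the uniform-sum hypothesis on $\theta^i$. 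Chaining this with $\vdash s_i =_0 f_i(s_1,\ldots,s_n)$ and $\vdash f_i(m,\ldots,m) =_0 m$ via two applications of (Triang), and then eliminating the hypotheses using (Cut) against the inductive hypothesis, produces $\vdash s_i =_{\alpha^{k+1}} m \in \U^\mu$.

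The only delicate point is that the Banach contraction step must be applied to the compound term $f_i \in \Om^\mu X$ rather than a raw signature symbol; Remark \ref{signature} is exactly what makes this legitimate, because it lets us treat $\Om^\mu$ as a Banach signature and $\U^\mu$ as a Banach theory over it. Beyond that the argument is a clean geometric-rate induction, driven entirely by the uniform-sum hypothesis $\sum_{k \leq n} \alpha^i_k = \alpha < 1$ shared across all tuples in all $\theta^i$.
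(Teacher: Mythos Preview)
Your proof is correct and considerably more direct than the paper's. You run the Banach contraction argument head-on inside the logic: using the uniform-sum hypothesis you get the geometric bound $\vdash s_i =_{\alpha^k} m$ by a clean induction on $k$, and then (Cont) finishes. The only point that needs care is applying the (Banach) rule to the compound terms $f_i\in\Om^\mu X$, and you handle that correctly via Remark~\ref{signature}.

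The paper takes a much more structural route: for the case $n=2$ it introduces the partial fixed points $\phi_1(x_2)=\mu x_1.f_1(x_1,x_2)$ and $\phi_2(x_1)=\mu x_2.f_2(x_1,x_2)$, composes them, and then unwinds using the quantitative diagonal property (Theorem~\ref{diagonal}) together with the $\mu$-commutation Lemma~\ref{ovelap-mu} to reduce both $s_1$ and $t=\mu x.g(x)$ to the same compound fixed point $\mu w.f_1(w,f_2(w,w))$. That argument shows amalgamation as a \emph{consequence of the Conway identities}, which is the point the section is trying to make: that fixed-point Banach theories are iteration theories in the sense of \cite{Simpson00}. Your argument, by contrast, bypasses dinaturality and the diagonal property entirely and exploits only the metric contraction; it is shorter and avoids the side conditions of Lemma~\ref{ovelap-mu}, but it does not exhibit the structural dependence on the Conway axioms that the paper is after.
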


\begin{proof}
	We only sketch the proof for the case $n=2$ that is simpler to present. The
    general case is proven in the same way, but one needs to keep
    track of more indices. 
	
	From the hypothesis, \\$\vdash s_1=_0 f(s_1,s_2)\in\U^\mu$ and $\vdash
    s_2=f_2(s_1,s_2)\in\U^\mu$.  Let $t$ denote $\mu
    x.g(x)$. From Theorem~\ref{fixed-point} we know that $\vdash t=_0
    g(t)\in\U^\mu$. \\ We will prove that for any $i$, $\vdash s_i=_0
    t\in\U^\mu$.
	
	Let $\phi_1(x_2)=\mu x_1.f_1(x_1,x_2)$ and $\phi_2(x_1)=\mu x_2.f_2(x_1,x_2)$. From the hypothesis we have $\vdash \phi_1(s_2)=_0 s_1\in\U^\mu$ and $\vdash\phi_2(s_1)=_0 s_2\in\U^\mu$. Consequently,
	$\vdash s_1=\phi_1(\phi_2(s_1))\in\U^\mu$. Applying Theorem \ref{fixed-point}, we get then 
	\\$\vdash s_1=_0 \mu z.\phi_1(\phi_2(z))\in\U^\mu$. By extending $\phi_2$ we get further
	$\vdash s_1=_0 \mu z.\phi_1(\mu x_2.f_2(z,x_2))\in\U^\mu$, and after extending $\phi_1$ we get
	\\	$\vdash s_1=_0 \mu z.\mu x_1.f_1(x_1,\mu x_2.f_2(z,x_2))\in\U^\mu$. We
    use the quantitative diagonal property and get 
	\\$\vdash s_1=_0 \mu v.f_1(v,\mu x_2.f_2(v,x_2))\in\U^\mu$. Now we apply
    Lemma~\ref{ovelap-mu} to obtain 
	\\$\vdash s_1=_0 \mu v.\mu x_2.f_1(v,f_2(v,x_2))\in\U^\mu$.
	\\Lemma~\ref{ovelap-mu} also gives us\footnote{The same sequence of operations can be used $n$ times if the arity of $f$ is $n$ and get a fixed-point as the one we get here.}
	\\$\vdash s_1=_0 \mu w.f_1(w,f_2(w,w))\in\U^\mu$.
	\\From the hypothesis we know that $\vdash t=_0f_1(t,t)\in\U^\mu$ and
    $\vdash t=_0f_2(t,t)\in\U^\mu$. Hence, $\vdash t=_0
    f_1(t,f_2(t,t))\in\U^\mu$. Next Theorem \ref{fixed-point} guarantees that  
	\\$\vdash t=_0 \mu w. f_1(w,f_2(w,w))\in\U^\mu$.  Combining this with the
    previous fixed-point description that we derived for $s_1$, we get $\vdash
    s_1=_0 t\in\U^\mu$.  Similarly one can prove $\vdash s_2=_0 t\in\U^\mu.$ 
\end{proof}

Note that all three theorems have a top-level statement that is stated in terms
of exact equality but the proofs use approximate equality.  In the amalgamation
proof the approximate reasoning is isolated into Lemma~\ref{ovelap-mu}.  In
addition to Theorems~\ref{dinaturality} and~\ref{diagonal}, Theorem
\ref{amalgamation} guarantees that any fixed-point extension of a Banach theory
is an iteration theory as defined in \cite{Simpson00}.


\section{The metric coinduction principle}

In this section we will investigate the relation of these theories with a very
interesting and useful coinduction principle proposed by Kozen
in~\cite{Kozen06,Kozen07}.  We will demonstrate that the metric coinduction
principle can be proven within any Banach fixed-point theorem and that this
principle is equivalent to our rule (Approx).  That being said, however, if we
restrict ourselves to \emph{finitary} proofs, we suspect that the metric coinduction principle
is more powerful.

The context in which the metric coinduction principle is stated
in~\cite{Kozen06,Kozen07} is a bit more liberal than the syntax of fixed-point
Banach theories, as it involves the concept of closed predicate, defined as a
predicate whose extension is a closed set in any bounded complete metric space.
For this reason, we will work in this section at a
metalevel, where semantics concepts, \emph{i.e.}\ metric and
topological concepts, are used together with the syntax of Banach theories.

Consider a Banach theory $\U$ over $\Om X$ and its fixed-point extension
$\U^\mu$.  A \emph{closed predicate} in this context is any predicate $P$ whose
extension, when interpreted in any model in $\CM(\U)$, is a closed set in the
open-ball topology induced by the metric.  In this setting, the metric
coinduction principle for the closed predicate $P$ is stated as follows, for any map
$f:n:\theta\triangleright i\in\Om^\mu X$, any $\ol y\in X^n$ and an arbitrary
$t\in\Om^\mu X$.
\begin{align*}
	\text{\textbf{(MCoind)}} \quad 
	& \frac{\vdash P(t)~~P(x)\vdash P(f(\ol y[x/i]))}{\vdash P(\mu i.f(\ol y))}\,.
\end{align*} 

Given a Banach theory $\U$ over $\Om X$, let $\U^M$ be the smallest extension of
$\U$ over $\Om^\mu X$ that is closed under the metric coinduction principle
(MCoind) - we call it the coinductive extension of $\U$. 

The next two theorems will relate $\U^M$ and $\U^\mu$.

\begin{theorem}\label{coind1}
	Let $\U$ be a Banach theory over $\Om X$, and let $\U^M$ and $\U^\mu$ be its
    coinductive extension and fixed-point extension
    respectively. Then $$\U^\mu\subseteq\U^M.$$ 
\end{theorem}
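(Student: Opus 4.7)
The approach is to unpack the definitions: since $\U^\mu$ is the smallest quantitative equational theory over $\Om^\mu X$ extending $\U$ that is closed under (Approx), and since $\U\subseteq\U^M$ by definition, it suffices to verify that $\U^M$ is itself closed under (Approx). Thus, given an arbitrary instance with $t:n:\theta\triangleright i\in\Om^\mu X$, $\ol s\in(\Om^\mu X)^n$, $u\in\Om^\mu X$, and $\e\geq 0$, the goal is to derive $u=_{\e/(1-a)}\mu i.t(\ol s\setminus i)$ from the hypothesis $\Gamma=\{u=_\e t(\ol s[u/i])\}$ inside $\U^M$, where $a=\max\{\alpha_i\mid\ol\alpha\in\theta\}<1$.

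The natural closed predicate to use is $P(z):=(z=_{\e/(1-a)}u)$; its extension in any model is the closed ball of radius $\e/(1-a)$ around the interpretation of $u$, hence manifestly closed. I will apply (MCoind) to the unary self-map $h(x):=t(\ol s[x/i])$: composition of Banach patterns gives $h$ the pattern $\{\ang{a}\}$, so $h$ is $a$-contractive in $x$, and by construction $\mu x.h(x)$ coincides with $\mu i.t(\ol s\setminus i)$. The base case $\vdash P(u)$ is $u=_{\e/(1-a)}u$, which follows from (Refl) and (Max). For the inductive step $\Gamma, P(x)\vdash P(h(x))$: from $P(x)$, that is $x=_{\e/(1-a)}u$, the (Banach) rule yields $t(\ol s[x/i])=_{a\e/(1-a)}t(\ol s[u/i])$; combining this with the hypothesis $u=_\e t(\ol s[u/i])$ via (Symm) and (Triang) yields $t(\ol s[x/i])=_{\e/(1-a)}u$, using the identity $a\e/(1-a)+\e=\e/(1-a)$. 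Applying (MCoind) under $\Gamma$ now yields $\Gamma\vdash\mu i.t(\ol s\setminus i)=_{\e/(1-a)}u$, and (Symm) delivers the desired (Approx) conclusion.

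Two mild technicalities deserve attention. First, (MCoind) as written applies to a function $f$ on a tuple of variables $\ol y\in X^n$, not to $t$ composed with the possibly complex terms $\ol s$; this is handled by first applying (MCoind) to $t$ with fresh variables $\ol y$ in positions other than $i$, and then substituting $\ol y\mapsto\ol s$ via (Subst). Second, (MCoind) is displayed with empty hypothesis-contexts, whereas our inductive step uses $\Gamma$; this is harmless because $\U^M$ is a quantitative equational theory in which $\Gamma$ can be internalised as temporary axioms, (MCoind) applied, and $\Gamma\vdash P(\mu x.h(x))$ then recovered through (Cut) and (Assumpt). The real content of the argument is the elementary calculation that the closed ball of radius $\e/(1-a)$ around $u$ is mapped into itself by $h$ whenever $u$ is within $\e$ of its own image under $h$ — exactly the geometric content of Banach's contraction principle, internalised via the closed-predicate coinduction rule.
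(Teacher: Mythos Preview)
Your proof is correct in its core idea and takes a somewhat different, more elementary route than the paper. The paper chooses the predicate
\[
R(y)\;\equiv\;\bigl[x=_\e f\lp x\rp\vdash x=_{\e/(1-a)} y\bigr]\in\U^M,
\]
an intersection of closed balls parameterised over all $x$; it then has to manufacture a base witness $t$ with $\vdash R(t)$ by appealing to the convergence of the iterate sequence $([f]_i^k\lp s\rp)_k$. Your choice of the single closed ball $P(z)\equiv(z=_{\e/(1-a)}u)$ is simpler and gives the base witness for free, since $P(u)$ is just reflexivity plus (Max). The inductive computation $a\e/(1-a)+\e=\e/(1-a)$ is identical in both arguments. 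What the paper's choice buys is that the hypothesis $u=_\e h(u)$ is already baked into $R$, so (MCoind) can be applied in the empty context as stated.

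One genuine wrinkle: your handling of the ambient context $\Gamma=\{u=_\e h(u)\}$ does not quite work as written. ``Internalising $\Gamma$ as temporary axioms'' means passing to the deductive closure of $\U^M\cup\Gamma$, but there is no reason this larger theory is still closed under (MCoind), so you cannot invoke the rule there and then pull the conclusion back via (Cut). The clean fix is to fold the hypothesis into the predicate itself: take
\[
P'(z)\;:=\;\bigl[u=_\e h(u)\Rightarrow z=_{\e/(1-a)}u\bigr].
\]
In any model/assignment its extension is either the closed ball $\ol{B_{\e/(1-a)}(\iota(u))}$ or the whole space, hence closed; your base case and inductive step go through verbatim for $P'$, and the conclusion $\vdash P'(\mu x.h(x))$ is exactly the (Approx) instance. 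With this adjustment your argument is complete and, arguably, tidier than the paper's, since it avoids any appeal to limits of iterate sequences for the base witness.
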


\begin{proof}
	To prove this result it is sufficient to show that $\U^M$ is closed under the rule (Approx).
	
	For simplicity, we focus on the $i$-th variable of $f$, $f\lp x_i\rp$ and let
    $a=\max\{\alpha_i\mid\ol\alpha\in\theta\}$. 
	
	Consider the predicate $$R(y)=\forall x.(x=_\e f\lp x\rp\vdash
    x=_{\frac{\e}{1-a}}y\in\U^M).$$ and let $\ol{B_\e(x)}=\{z\in\Om^\mu X\mid~
    \vdash x=_\e z\}$, which is interpreted in any model as the $\e$-closed ball
    centred at $x$. 
	\\	Then we can characterize $R$ as follows
	$$R(y)=\forall x(f\lp x\rp\in\ol{B_\e(x)}\Rightarrow y\in\ol{B_{\frac{\e}{1-a}}(x)}~)$$
	$$=y\in\bigcap_{z\in\{x\mid~f\lp
      x\rp\in\ol{B_\e(x)}\}}\ol{B_{\frac{\e}{1-a}(x)}}.$$ Hence, $R$ is a closed
    predicate and we can use it to instantiate (MCoind) and conclude that $\U^M$ is
    closed under the rule  
	$$\frac{\vdash R(t)~~R(x)\vdash R(f\lp x\rp)}{\vdash R(\mu x.f\lp x\rp)}.$$
	We prove now that for any $x$, $R(x)\vdash R(f\lp x\rp)\in\U^M$ which is equivalent to proving that 
	$$\forall z[\forall x(x=_\e f\lp x\rp\vdash x=_{\frac{\e}{1-a}}z)\in\U^M$$
	$$\Rightarrow\forall x(x=_\e f\lp x\rp\vdash x=_{\frac{\e}{1-a}}f\lp x\rp)\in\U^M].$$
	Suppose that for any $x$, $x=_\e f\lp x\rp\vdash x=_{\frac{\e}{1-a}} z\in\U^M$. Since $f$ is contractive, (Banach) guarantees that \\$x=_\e y\vdash f\lp x\rp=_{a\e}f\lp y\rp\in\U^M$. Hence,
	\\$x=_\e f\lp x\rp\vdash f\lp x\rp=_{\e\frac{a}{1-a}} f\lp z\rp\in\U^M$. Next (Triang) proofs
	\\$x=_\e f\lp x\rp\vdash x=_{\frac{\e}{1-a}} f\lp z\rp\in\U^M$, hence for
    any $x$, $R(x)\vdash R(f\lp x\rp)\in\U^M$. 
	
	Now it is not difficult to notice that $x=_\e f\lp x\rp\vdash
    x=_{\e\frac{\-a^n}{1-a}}[f]^n_i\lp s\rp$ - Theorem \ref{Banach}. So, since
    ($1$-bound) guarantees that for any $s\in\Om^\mu X$, $\vdash s=_1 f\lp
    s\rp\in\U^M$, we get that the sequence $([f]_i^k\lp s\rp)_{k\geq 1}$ is
    convergent in $\U^M$ and its limit $t$ is such that $\vdash R(t)\in\U^M$.  
	
	Hence both hypothesis of (MCoind) for $R$ are satisfied, meaning that its
    conclusion has to be true, which is $$\vdash R(\mu i.f)\in\U^M,$$ but this
    is exactly (Approx).   
\end{proof}

The next theorem says that whenever we have a closed predicate,
any consequences proved using (Mcoind) with this predicate
can be proved in $\U^{\mu}$.

\begin{theorem}\label{coind2}
  Let $\U$ be a Banach theory over $\Om X$, let $\U^{\mu}$ be its fixed-point
  extension and let $P$ be a closed predicate. Then any consequences of $P$
  obtained using (Mcoind) can be established in $\U^{\mu}$.

\end{theorem}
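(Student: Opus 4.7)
The plan is to argue soundness of the metric coinduction principle semantically in an arbitrary model and then invoke completeness. Fix a closed predicate $P$, a map $f:n:\theta\triangleright i\in\Om^\mu X$, a term $t\in\Om^\mu X$ and variables $\ol y\in X^n$, and suppose that the premises $\vdash P(t)$ and $P(x)\vdash P(f(\ol y[x/i]))$ of (MCoind) have been established in $\U^\mu$. The task is to derive $\vdash P(\mu i.f(\ol y))$ inside $\U^\mu$.

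First I would pass to an arbitrary model $\A\in\CM(\U)=\CM(\U^\mu)$ (the two classes coinciding by Theorem~\ref{t1}) and an assignment $\iota\in\Omega[X|\A]$. A straightforward induction on $k\geq 1$, applying the second premise to $\iota(t)$ iteratively, shows that $\iota([f]_i^k(\ol y[t/i]))$ lies in the extension $\sem{P}_\A$ of $P$ in $\A$ for every $k$. Since $f^\A$ is $a$-contractive in its $i$-th argument with $a=\max\{\alpha_i\mid\ol\alpha\in\theta\}<1$, the Banach fixed-point theorem in the underlying 1-bounded complete metric space forces this sequence to be Cauchy and to converge to $\iota(\mu i.f(\ol y\setminus i))$; the same convergence is also witnessed syntactically by Corollary~\ref{mu-limit}. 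Because $\sem{P}_\A$ is closed by hypothesis, it contains the limit, and hence $\vdash P(\mu i.f(\ol y))$ holds in every model of $\U^\mu$.

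The completeness theorem for fixed-point theories (Theorem~\ref{completeness'}) then promotes this semantic statement into a derivation in $\U^\mu$, closing the argument. The main point of care, and the main obstacle, is that $P$ is not required to be a single quantitative judgement; as with the predicate $R$ used in the proof of Theorem~\ref{coind1}, it may only be a predicate whose extension is closed. The standard remedy is that every closed subset of a 1-bounded complete metric space is an intersection of basic closed balls $\ol{B_\e(s)}=\{z\mid\,\vdash s=_\e z\}$, each of which does correspond to a quantitative judgement. The semantic-plus-completeness argument above applies to each such generating ball-judgement separately, so every quantitative consequence of $P(\mu i.f(\ol y))$ is derivable in $\U^\mu$, which is exactly the content of the theorem.
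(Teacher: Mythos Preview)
Your core argument—showing that the iterates $[f]_i^k(\ol y[t/i])$ all satisfy $P$, that they converge to $\mu i.f(\ol y\setminus i)$, and that closedness of $P$ forces the fixed point to satisfy $P$—is exactly the idea the paper uses. The routes differ: you pass to models and argue semantically, then try to return to $\U^\mu$ via completeness; the paper works syntactically throughout, characterising $P$ at the metalevel in terms of $\U^\mu$-derivability and never invoking completeness.

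There is a concrete error in your final step. You claim that every closed subset of a $1$-bounded complete metric space is an intersection of closed balls $\ol{B_\e(s)}=\{z\mid\,\vdash s=_\e z\}$. This is false: already in $[0,1]$ the two-point set $\{0,1\}$ is closed but is not an intersection of closed intervals. The correct decomposition, which the paper uses, writes a closed set as an intersection of \emph{complements of open balls}, i.e.\ sets of the form $\{z\mid d(s_i,z)\geq\e_i\}$. Membership in such a set is a \emph{negative} condition—non-derivability of $s_i=_\delta z$ for any $\delta<\e_i$—and Theorem~\ref{completeness'} only transfers positive quantitative judgements from models to $\U^\mu$. So your bridge back via completeness does not go through as written.

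The paper sidesteps this by taking the complements-of-open-balls decomposition, expressing $P(x)$ as the metalevel condition $\forall i\,(\vdash s_i=_\delta x\in\U^\mu\Rightarrow\delta\geq\e_i)$, and arguing by contradiction entirely within $\U^\mu$. If $P(\mu x.f\lp x\rp)$ failed, some $\vdash s_j=_r\mu x.f\lp x\rp\in\U^\mu$ would hold with $r<\e_j$; combined with Corollary~\ref{mu-limit} and (Triang) this yields $\vdash [f]_i^k\lp s\rp=_{\e_j-p}s_j\in\U^\mu$ for a suitable iterate and some $p>0$, contradicting the fact (obtained by iterating the second premise on $t$) that every iterate satisfies $P$.
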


\begin{proof}
	Let $P$ be a closed predicate.  Then it must be the complement of an open predicate $B$, i.e., $$P=B^c.$$
	Let $$B_\e(x)=\{y\in\Om^\mu X\mid~\vdash x=_\delta y\in\U^\mu\text{ for some
    }\delta<\e\},$$ 
	be the $x$-centred open ball of radius $\e>0$.  These sets for a base in the
    open ball topology, hence there must exist a  set $I$ of indices and a set
    of $I$-indexed terms $s_i\in\Om^\mu X$ such that $$B=\bigcup_{i\in
      I}B_{\e_i}(s_i).$$ Consequently, $$P=\bigcap_{i\in I}B^c_{\e_i}(s_i).$$
    Now we have 
	$$B^c_\e(x)=\{y\mid\vdash x=_\delta y\in\U^\mu\Rightarrow\delta\geq\e\}.$$ Hence,
	$$P=\{y\mid\forall i\in I, \vdash s_i=_\delta
    y\in\U^\mu\Rightarrow\delta\geq\e_i\}.$$ Hence we can define any closed
    predicate $P$ as $$P(x)=\forall i\in I (\vdash s_i=_\delta
    x\in\U^\mu\Rightarrow\delta\geq\e_i).$$  Now we prove that $\U^\mu$ is closed
    under (MCoind) for $P$. 
	\\Suppose that for some $s\in\Om^\mu X$, $\vdash P(s)\in\U^\mu$, and that
    $P(x)\vdash P(f\lp x\rp)\in\U^\mu$. The second one means
	$$\forall x[\forall i(\vdash x=_\delta s_i\in\U^\mu\Rightarrow\delta\geq\e_i)$$
	$$\Rightarrow\forall i(\vdash f\lp x=_\delta
    s_i\in\U^\mu\Rightarrow\delta\geq\e_i)].$$ Iterating this over $$\forall
    i\in I(\vdash s_i=_\delta s\in\U^\mu\Rightarrow\delta\geq \e_i),$$ which is
    an equivalent statement for $\vdash P(s)\in\U^\mu$, we get 
	$$\forall k\;\forall i\;[\vdash [f]_i^k\lp s\rp=_\delta s_i\Rightarrow\delta\geq
    \e_i] (*).$$
	We need to prove that $$\forall i[\vdash s_i=_\delta \mu x.f\lp
    x\rp\Rightarrow\delta\geq\e_i].$$ 
	Suppose this is not the case and there exists some $j\in I$ so that for some $r>0$, 
	$$\vdash\mu x.f\lp x\rp=_rs_j\in\U^\mu\;\wedge\; r <\e_j.$$
	We know from Corollary \ref{mu-limit} that for any $0<p<\e_j-r$ there exists
    some $k$ s.t. 
	$$\vdash\mu x.f\lp x\rp=_{\e_j-r-p}[f]_i^k\lp s_j\rp\in\U^\mu.$$  Finally
    (Triang) gives us  $$\vdash [f]_i^k\lp s_j\rp=_{\e_j-p}s_j\in\U^\mu,$$  but
    this contradicts the statement $(*)$ above since $\e_j-p <\e_j$. 
\end{proof}

The results stated in Theorems \ref{coind1} and \ref{coind2} show that
the metric coinduction principle, despite its more semantic flavour and its
quantification over all closed predicates, has the same power as our fixed
point Banach theories.  However, it is often easier to use and is a very
attractive proof principle.


\section{Markov Decision Processes and \\ the Bellman equation} 
Markov decision processes~\cite{Puterman94} are a well known formalism used in
operations research and extensively in reinforcement learning~\cite{Sutton98}.
The Bellman equation is perhaps the most common application of the Banach
fixed-point theorem.  This section is an extended example showing how one can
reason about the Bellman equation in our setting.  Indeed this research project
began from a desire to treat the Bellman equation as an example within the
quantitative equational logic framework before we developed the general theory
reported here.


\subsection{Markov decision processes}
\begin{df}
A Markov decision process is a tuple $$\MM=(S,A,(P^a)_{a\in A},(R^a)_{a\in A})$$ where 
\begin{itemize}
	\item $S$ is a finite set of \emph{states}; let $\Delta S$ represent the set
      of probability distributions on $S$. 
	\item $A$ is a finite set of \emph{actions}; let $\Delta A$ represent the
      set of probability distributions on $A$. 
	\item For each $a\in A$, $P^a:S\to\Delta(S)$ are the labelled probabilistic
      \emph{transitions}. 
	\item For each $a\in A$, $R^a: S\to[0,1]$ is the \emph{reward} function.
    \end{itemize}
\end{df}
One can think of these as transition systems where an external agent controls
the system choosing actions according to some policy.  The system responds by
changing state according to the transition function and returning a reward.  The
reward is accumulated, with a multiplicative discount factor, and the goal of
reinforcement learning is to find the best policy for optimizing the reward.

The effectiveness of a particular policy is captured by what are called
\emph{value functions} which summarize the aggregated discounted rewards
associated with a policy.  Mathematically, value functions are elements of the
space $\V=[0,1]^S$; this is a metric space endowed with the
metric $$d(f,g)=\max_{s\in S}|f(s)-g(s)|.$$ A \emph{policy} is a map
$\pi:S\to\Delta A$ that associates to each state a probability distribution over
the actions.  Let $\Pi$ denote the set of policies for $\MM$.
For arbitrary $a\in A$ we write $\hat a$ for the constant policy that associates
to any state the Dirac distribution concentrated at $a$.

For an arbitrary policy $\pi\in\Pi$, the expected immediate reward of $\pi$ is
the value function $R^\pi\in \V$ defined for arbitrary $s\in S$,
by $$R^\pi(s)=\sum_{a\in A}\pi(s)(a) R^a(s).$$

Given a policy $\pi\in\Pi$ and a discount factor $\gamma\in(0,1)$, the Bellman
operator of $\pi$ is the operator $T^\pi:\V\to\V$ defined for arbitrary $f\in\V$
and $s\in S$ as follows 
$$T^\pi(f)(s)=(1-\gamma) R^\pi(s)+\gamma\sum_{a\in A}\sum_{s'\in
  S}\pi(s)(a)P^a(s)(s')f(s')$$ 

The Bellman equation for the policy $\pi\in\Pi$ and discount factor
$\gamma\in(0,1)$ is the following fixed point equation over $\V$ $$X=T^\pi(X).$$
The discount factor makes this operator contractive and thus has a unique fixed
point: this is the value function of the policy $\pi$.


\newcommand{\brck}[1]{\llbracket #1\rrbracket}

\subsection{Reward Barycentric Algebra}
\textbf{Assumptions} For the rest of this section, we assume a fixed
Markov decision process $\MM=(S,A,(P^a)_{a\in A},(R^a)_{a\in A})$ and a fixed
discount factor $\gamma\in(0,1)$.  

We develop a particular Banach theory, designed for
solving the Bellman equation for $\MM$ and $\gamma$.  Its signature extends the barycentric signature and the theory
extends the quantitative barycentric theory developed in~\cite{Mardare16}. The
models of our theory will be called reward barycentric algebras (RBA), and will
be a specialised class of barycentric algebras, as defined in~\cite{Mardare16}, devised with additional algebraic structure.

\textbf{Signature.} Consider the Banach signature $\Sigma$ containing the following basic operators.
\begin{itemize}
	\item For each $\e\in[0,1]$, $+_\e:2:\{\ang{\e,1-\e}\}\in\Sigma$;
	\item For each $\pi\in\Pi$, $\ang{\pi}:1:\{\ang{1}\}\in\Sigma$;
	\item For each $\pi\in\Pi$, $|\pi|:1:\{\ang{\gamma}\}\in\Sigma$.
\end{itemize}

Consider now the Banach theory $\B$ over $\hat\Sigma X$ axiomatized by the following two sets of axioms

\textbf{Barycentric axioms:} \\for arbitrary $\e,\e'\in[0,1]$, $p,q\in\preals$, $x,x',y,y'\in X$
\begin{description}[leftmargin=*]
	\item[(B1)] $\vdash x+_1x'=_0 x$
	\item[(B2)] $\vdash x+_\e x=_0 x$
	\item[(SC)] $\vdash x+_\e x'=_0 x'+_{1-\e}x$
	\item[(SA)] $\vdash (x+_\e x')+_{\e'}y=_0 x+_{\e\e'}(x'+_{\frac{\e'-\e\e'}{1-\e\e'}}y)$ for $\e\e'<1$
	\item[(BA)] $\{x=_p x', y=_q y'\}\vdash x+_\e x'=_{\e p+(1-\e)q}y+_\e y'$
\end{description}

\textbf{Reward axioms:} \\for arbitrary $\pi,\pi'\in\Pi$, $\e\in[0,1]$ and $x,y\in X$
\begin{description}[leftmargin=*]
	\item[(R1)] $\vdash \ang{\e\pi+(1-\e)\pi'}x=_0 \ang{\pi}x+_\e\ang{\pi'}x$
	\item[(R2)] $\vdash |\e\pi+(1-\e)\pi'| x =_0 |\pi| x+_\e |\pi'|x$
	\item[(R3)] $x=_\e y\vdash |\pi|x=_{\gamma\e}|\pi|y$
\end{description}

\bigskip
\textbf{Algebra of value functions.} The space $(\V,d)$ of value
functions of $\MM$ is a $1$-bounded complete metric space and has a natural
$\sigma$-algebra of Borel sets. We interpret the basic functions
in $\Sigma$, for arbitrary $f,g\in \V$, $\pi\in\Pi$ and $s\in S$ as follows 
\begin{itemize}
	\item $(f+_\e g)^\V=\e f^\V+(1-\e)g^\V$
	\item $\displaystyle(\ang{\pi}f)^\V(s)=\sum_{a\in A}\pi(s)(a)\sum_{s'\in S}P^a(s)(s')f^\V(s')$
	\item $(|\pi|f)^\V=(1-\gamma) R^\pi+\gamma f^\V$
\end{itemize}

It is not difficult to verify that the functions have indeed the expected Banach patterns,
hence $\V$ with the previous interpretation is indeed an algebra of the right
form.  Consider now $\hat\Sigma^\mu X$ the fixed-point extension of
$\hat\Sigma X$.

For simplicity, in what follows we denote the interpretation of any $t\in\hat\Sigma^\mu X$ in $\V$ by $\brck t$. We can now prove that $\V$ satisfies indeed the axioms of $\B$.

\begin{theorem}
	The space $\V$ of value functions of $\MM$ is a model for $\B$, $\V\models\B.$
\end{theorem}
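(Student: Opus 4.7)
The plan is to verify each axiom of $\B$ directly against the concrete interpretation of $\hat\Sigma$ in $\V$. Before doing so, I would first check that the three classes of basic symbols in $\Sigma$ really are non-expansive with the stated Banach patterns in $\V$, since being a quantitative algebra is a prerequisite for satisfaction. For $+_\e$ and $\ang{\pi}$ this is a pointwise computation on the sup-metric $d$; for $|\pi|$ one notes that
\[
|(|\pi|f)^\V(s) - (|\pi|g)^\V(s)| = \gamma|f(s)-g(s)|,
\]
so $d(|\pi|f,|\pi|g) = \gamma\, d(f,g)$, matching the pattern $\{\ang{\gamma}\}$.

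Second, for the barycentric axioms (B1), (B2), (SC), (SA), (BA) I would simply invoke the fact established in \cite{Mardare16} that $(\V,d)$ with the operation $+_\e$ interpreted as the pointwise convex combination is a model of the quantitative barycentric theory. Since our interpretation of $+_\e$ coincides exactly with that one, all five barycentric axioms hold in $\V$.

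Third, I would handle the three reward axioms by direct calculation, based on two linearity observations. For (R1), I would unfold
\[
(\ang{\e\pi + (1-\e)\pi'}f)^\V(s) = \sum_{a}(\e\pi(s)(a)+(1-\e)\pi'(s)(a))\sum_{s'}P^a(s)(s')f(s'),
\]
split the sum and recognize the result as $\e(\ang{\pi}f)^\V(s) + (1-\e)(\ang{\pi'}f)^\V(s) = (\ang{\pi}f +_\e \ang{\pi'}f)^\V(s)$. For (R2), a parallel argument applies, using the linearity of $R^\pi$ in $\pi$: from $R^\pi(s) = \sum_a \pi(s)(a)R^a(s)$ one gets $R^{\e\pi+(1-\e)\pi'} = \e R^\pi + (1-\e)R^{\pi'}$, and then
\[
(|\e\pi+(1-\e)\pi'|f)^\V = (1-\gamma)(\e R^\pi + (1-\e)R^{\pi'}) + \gamma f = \e(|\pi|f)^\V + (1-\e)(|\pi'|f)^\V.
\]
For (R3), assuming $d(f,g) \leq \e$, the earlier pattern calculation gives $d(|\pi|f,|\pi|g) = \gamma d(f,g) \leq \gamma\e$ directly.

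The verification is essentially routine; the only point requiring care is to present the calculations so that the \textbf{(BA)} rule, which has a hypothetical form, is handled correctly as an implication (if $d(f,f')\leq p$ and $d(g,g')\leq q$ then $d(f+_\e f', g+_\e g')\leq \e p+(1-\e)q$), rather than as a closed equation. I expect no real obstacle, since all our operations are either linear combinations or affine maps with contraction factor exactly $\gamma$, so each axiom reduces to a transparent identity on value functions in $[0,1]^S$.
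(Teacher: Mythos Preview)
Your proposal is correct and follows essentially the same route as the paper: invoke \cite{Mardare16} for the barycentric axioms, and verify (R1), (R2), (R3) by the same direct linearity/affine calculations on $\V$. Your additional preliminary check that the basic symbols have the stated Banach patterns is something the paper handles informally just before the theorem, so including it is reasonable and changes nothing substantive.
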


\begin{proof}
	The fact that the Barycentric axioms are satisfied by $\V$ is already proven
    in \cite{Mardare16}. We prove here the soundness of the reward axioms. 
	
	(R1): for any $t\in\hat\Sigma^\mu X$,
	\\$\brck{\ang{\e\pi+(1-\e)\pi'}t}(s)$
	\\$\displaystyle=\sum_{a\in A}(\e\pi(s)(a)+(1-\e)\pi'(s)(a))\sum_{s'\in S}P^a(s)(s')\brck t(s')$
	\\$\displaystyle=\e\sum_{a\in A}\pi(s)(a)\sum_{s'\in S}P^a(s)(s')\brck t(s')+$
	\\$\displaystyle +(1-\e)\sum_{a\in A}\pi'(s)(a)\sum_{s'\in S}P^a(s)(s')\brck t(s')$
	\\$=\displaystyle\e\brck{\ang{\pi}t}(s)+(1-\e)\brck{\ang{\pi'}t}(s)$
	\\$=\brck{\ang{\pi}t+_\e\ang{\pi'}t}(s).$
	
	(R2): for any $t\in\hat\Sigma^\mu X$,
	\\$\displaystyle\brck{|\e\pi+(1-\e)\pi'|t}(s)=$
	\\$\displaystyle=(1-\gamma)R^{\e\pi+(1-\e)\pi'}(s)+\gamma \brck t(s)$
	\\$\displaystyle=(1-\gamma)\sum_{a\in A}R^a(s)(\e\pi+(1-\e)\pi')(s)(a)+\gamma\brck t(s)$
	\\$=\displaystyle\e((1-\gamma)\sum_{a\in A}R^a(s)\pi(s)(a)+\gamma\brck t(s))+$
	\\$\displaystyle+(1-\e)((1-\gamma)\sum_{a\in A}R^a(s)\pi'(s)(a)+\gamma\brck t(s))$
	\\$=\brck{|\pi|t+_\e|\pi'|t}(s).$
	
	(R3): for any $t,t'\in\hat\Sigma^\mu X$,
	\\$|\brck{|\pi| t}(s)-\brck{|\pi| t'}(s)|$
	\\$=|(1-\gamma)R^\pi(s)+\gamma\brck t(s)-(1-\gamma)R^\pi(s)-\gamma \brck{t'}(s)|$
	\\$=\gamma |\brck t(s)-\brck{t'}(s)|.$
\end{proof}


\subsection{Solving the Bellman equation iteratively}
We define now, for any $\pi\in\Pi$ a derived operator $O^\pi$ inductively on the
structure of the policy $\pi$ as follows. 
\begin{itemize}
	\item For $a\in A$, $O^{\hat a}t=|\hat a|\ang{\hat a}t$.
	\item For $\pi,\pi'\in\Pi$ and $\e\in[0,1]$, $$O^{\e\pi+(1-\e)\pi'}t=O^\pi+_\e O^{\pi'}.$$
    \end{itemize}
    
    Since all the distributions with finite support can be represented as convex
    combinations of Dirac distributions, any policy can be represented by a term
    with appropriately nested $+_\e$ operators on top of constant policies.
    Hence the definition of $O^\pi$ is complete.

    The following theorem states that $O^\pi$ is the syntactic counterpart of
    the Bellman operator $T^\pi$.

\begin{theorem}\label{Bellman-op}
	For any $\pi\in\Pi$ and any $t\in\hat\Sigma^\mu X$, $$\brck{O^\pi t}=T^\pi\brck t.$$
\end{theorem}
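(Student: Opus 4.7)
The plan is to prove the statement by structural induction on the representation of $\pi$ as a (nested) convex combination of Dirac policies $\hat a$, matching the inductive definition of $O^\pi$. The key idea is that the Bellman operator $T^\pi$ is affine in $\pi$ in both the reward term and the transition term, and this linearity is exactly what the barycentric interpretation of $+_\e$ in $\V$ captures. So the induction step will reduce to pointwise linearity of $T^\pi$ as a function of the policy.

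For the base case $\pi = \hat a$, I would unfold definitions on both sides. On the syntactic side, $\brck{O^{\hat a}t}(s) = \brck{|\hat a|\langle \hat a\rangle t}(s)$, which by the interpretations of $|\cdot|$ and $\langle \cdot\rangle$ given just above is $(1-\gamma)R^{\hat a}(s) + \gamma\sum_{s'\in S} P^{a}(s)(s')\brck t(s')$. This coincides with $T^{\hat a}\brck t(s)$ because, under $\hat a$, the outer sum over actions collapses to the single term $a$ and $R^{\hat a}(s) = R^a(s)$.

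For the inductive step $\pi = \e\pi_1 + (1-\e)\pi_2$, I would first use the clause $O^{\e\pi_1+(1-\e)\pi_2}t = O^{\pi_1}t +_\e O^{\pi_2}t$ and then the barycentric interpretation of $+_\e$ in $\V$ to obtain
\[
\brck{O^{\pi}t}(s) = \e\,\brck{O^{\pi_1}t}(s) + (1-\e)\,\brck{O^{\pi_2}t}(s).
\]
Applying the induction hypothesis twice, this equals $\e\,T^{\pi_1}\brck t(s) + (1-\e)\,T^{\pi_2}\brck t(s)$. It then remains to check that this coincides with $T^{\pi}\brck t(s)$. This is a short pointwise calculation: since $\pi(s)(a) = \e\pi_1(s)(a) + (1-\e)\pi_2(s)(a)$, both $R^\pi(s) = \sum_a \pi(s)(a)R^a(s)$ and the inner expectation $\sum_a\pi(s)(a)\sum_{s'} P^a(s)(s')\brck t(s')$ split linearly as $\e(\cdots) + (1-\e)(\cdots)$, and the distinguished $(1-\gamma)$ and $\gamma$ weights factor uniformly through the convex combination.

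The only delicate point, and the one I expect to need explicit comment, is well-definedness: a general policy admits many different representations as nested convex combinations of Dirac policies, so one must check that $O^\pi$ gives the same element of $\V$ for any such representation. The cleanest way I would handle this is to observe that the above induction actually shows $\brck{O^\pi t}(s) = T^\pi\brck t(s)$ for any syntactic presentation of $\pi$ simultaneously, and since the right-hand side depends only on the measure-theoretic policy $\pi$ (not on its representation), well-definedness follows as a corollary of the theorem itself. Alternatively, one could appeal directly to the barycentric axioms (B1)--(SA), which are sound in $\V$, to justify that any two presentations give the same interpretation before even invoking $T^\pi$.
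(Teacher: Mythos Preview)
Your proof is correct and follows essentially the same route as the paper: structural induction on the representation of $\pi$, with the base case $\hat a$ handled by unfolding the interpretations of $|\hat a|$ and $\langle\hat a\rangle$, and the inductive step handled via the barycentric interpretation of $+_\e$ together with the affine dependence of $T^\pi$ on $\pi$. Your added remark on well-definedness of $O^\pi$ across different convex presentations is a point the paper does not make explicit; your observation that the equality $\brck{O^\pi t}=T^\pi\brck t$ itself forces representation-independence of the left-hand side is a clean way to dispose of it.
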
 

\begin{proof}
	We prove this inductively on the structure of $\pi\in\Pi$. Let $s\in S$.
	
	For $\pi=\hat a$, $a\in A$,
	\\$\brck{O^{\hat a}t}(s)=\brck{|\hat a\ang{\hat a}t}(s)$
	\\$=(1-\gamma)R^a(s)+\gamma\brck{\ang{\hat a}t}(s)$
	\\$=(1-\gamma) R^a(s)+\gamma\sum_{s'\in S}P^a(s)(s')\brck t(s')
	=T^{\hat a}\brck t(s).$
	\\For $\e\pi+(1-\e)\pi'$ under the inductive hypothesis for $O^\pi t$ and $O^{\pi'}t$. We have
	\\$\brck{O^{\e\pi+(1-\e)\pi'}t}(s)=\brck{O^\pi t+_\e O^{\pi'}t}(s)$
	\\$=\e\brck{O^\pi t}(s)+(1-\e)\brck{O^{\pi'}t}(s)$
	\\$=\e T^\pi \brck{t}(s)+(1-\e)T^{\pi'}\brck{t}(s)$
	\\$=\e\sum_{a\in A}\pi(s)(a)[(1-\gamma)R^a(s)+$
	\\$\gamma\sum_{s'\in S} P^a(s)(s')\brck t(s')] +$
	\\$+(1-\e)\sum_{a\in A}\pi'(s)(a)[(1-\gamma)R^a(s)+$
	\\$+\gamma\sum_{s'\in S} P^a(s)(s')\brck t(s')]$
	\\$=\sum_{a\in A}(\e\pi(s)+(1-\e)\pi'(s))(a)[(1-\gamma)R^a(s)+$
	\\$+\gamma\sum_{s'\in S}P^a(s)(s')\brck t(s')]=T^{\e\pi+(1-\e)\pi'}\brck t(s).$
\end{proof}

Next we verify that $O^\pi$ has Banach pattern $\{\ang{\gamma}\}$.
\begin{lemma}\label{Bellman-pattern}
	For any $\pi\in\Pi$, $$O^\pi:1:\{\ang{\gamma}\}\in \hat\Sigma^\mu.$$
\end{lemma}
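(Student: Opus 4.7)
The plan is a straightforward structural induction on $\pi\in\Pi$, leveraging the composition and subconvex-sum rules for Banach patterns that were set up in Section~2.

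\textbf{Base case:} $\pi=\hat a$ for some $a\in A$. From the signature we have $\langle\hat a\rangle:1:\{\langle 1\rangle\}$ and $|\hat a|:1:\{\langle\gamma\rangle\}$. Applying the composition rule for term composition ($\theta\circ\langle\zeta_1,\ldots,\zeta_n\rangle$) to the definition $O^{\hat a}t=|\hat a|\langle\hat a\rangle t$, the resulting pattern is
\[\{\langle\gamma\rangle\}\circ\langle\{\langle 1\rangle\}\rangle=\{\gamma\cdot\langle 1\rangle\}=\{\langle\gamma\rangle\},\]
as required.

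\textbf{Inductive case:} $\pi=\e\pi_1+(1-\e)\pi_2$. By the inductive hypothesis, $O^{\pi_1}:1:\{\langle\gamma\rangle\}$ and $O^{\pi_2}:1:\{\langle\gamma\rangle\}$. Since $+_\e:2:\{\langle\e,1-\e\rangle\}$, the composition rule applied to $O^\pi t=O^{\pi_1}t+_\e O^{\pi_2}t$ yields the pattern
\[\{\langle\e,1-\e\rangle\}\circ\langle\{\langle\gamma\rangle\},\{\langle\gamma\rangle\}\rangle=\{\e\langle\gamma\rangle+(1-\e)\langle\gamma\rangle\}=\{\langle\gamma\rangle\},\]
where the crucial computation is that $\e\gamma+(1-\e)\gamma=\gamma$, i.e.\ a convex combination of $\gamma$ with itself is again $\gamma$.

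This completes the induction. There is no real obstacle here: the proof is essentially bookkeeping with the composition operation on Banach patterns, and the key observation is that the pattern $\{\langle\gamma\rangle\}$ is closed under subconvex sums of its copies — which is exactly what the barycentric construction of policies produces. Well-definedness of the induction is guaranteed by the remark preceding the statement, namely that every policy can be represented as a nested barycentric combination of constant (Dirac) policies, so no policy falls outside the two cases above.
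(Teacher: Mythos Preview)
Your proof is correct, and like the paper's it proceeds by structural induction on $\pi$ with Dirac policies as the base case and convex combinations as the inductive step. The only difference is in mechanism: you compute the pattern of $O^\pi$ directly via the composition operation $\theta\circ\langle\zeta_1,\ldots,\zeta_n\rangle$ on patterns from Section~4, whereas the paper derives the quantitative judgment $x=_\e y\vdash O^\pi x=_{\gamma\e}O^\pi y$ inside $\B$, using (NExp) for $\langle\hat a\rangle$ and $+_\e$ together with axiom (R3) for $|\hat a|$. Your route is slightly more direct given that the pattern of a composite term is \emph{defined} by exactly those composition rules; the paper's route makes explicit which axioms of the theory $\B$ carry the argument, in particular that (R3) is what supplies the factor $\gamma$ in the base case.
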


\begin{proof}
	We prove, inductively on the structure of $\pi$, that \\$x=_\e y\vdash O^\pi x=_{\gamma\e} O^\pi y\in\B.$
\\For $\pi=\hat a$, (NExp) for $\ang{\hat a}$ gives us 
	\\$x=_\e y\vdash \ang{\hat a}x=_\e\ang{\hat a}y\in\B$ and instantiating (R2),
	\\$\ang{\hat a}x=_\e \ang{\hat a}y\vdash |\hat a|\ang{\hat a}x=_{\e\gamma}|\hat a|\ang{\hat a}y\in\B$,
	\\ hence, $x=_\e y\vdash |\hat a|\ang{\hat a}x=_{\e\gamma}|\hat a|\ang{\hat a}y\in\B$.
	\\For $\e\pi+(1-\e)\pi'$, consider the inductive hypotheses
	\\$x=_\e y\vdash O^\pi x=_{\e\gamma}O^{\pi}y\in\B$ and 
	\\$x=_\e y\vdash O^{\pi'} x=_{\e\gamma}O^{\pi'}y\in\B$.  (NExp) of $+_\e$ gives
	$$\{O^\pi x=_{\e\gamma}O^\pi y, O^{\pi'} x=_{\e\gamma}O^{\pi'} y\}\vdash $$
	$$\vdash O^\pi x+_\e O^{\pi'}x=_{\gamma\e}O^\pi y+_\e O^{\pi'}y\in\B.$$
	Hence, $x=_\e y\vdash O^\pi x+_\e O^{\pi'}x=_{\gamma\e}O^\pi y+_\e O^{\pi'}y\in\B$
	\\i.e., $x=_\e y\vdash O^{\e\pi+(1-\e)\pi'} x=_{\gamma\e}O^{\e\pi+(1-\e)\pi'} y\in\B$. 
\end{proof}

Since our working hypothesis is that $\gamma<1$, the previous lemma ensures that
in the fixed-point extension of $\B$, which is $\B^\mu$, we have judgements
involving $\mu x.O^\pi x$.  We use this to show how the Bellman equation can be
solved.

Recall that $[O^\pi]_1^k(s)$ represents the k-th iteration of $O^\pi$ on
$s$. Since $O^\pi$ has only one variable, we drop the lower
index $1$ and write $[O^\pi]^k(s)$ for the k-th iteration on $s$. 

The next theorem is a direct consequence of the Corollary \ref{mu-limit} and Theorem \ref{Bellman-op}.

\begin{theorem}[Bellman equation]\label{Bellman-eq} 
  For any $\pi\in\Pi$ and any $s\in\hat\Sigma^\mu X$, the sequence
  $([O^\pi]^k(s))_{k\geq 1}$ is convergent in $\B^\mu$ and its limit is
  $\mu x.O^\pi x$, i.e., $\forall\e>0 ~\exists n~\forall m$,
  $$\vdash [O^\pi]^{m+n}(s)=_\e\mu x.O^\pi x.$$ Moreover, $\brck{\mu x.O^\pi x}$
  is the unique solution of Bellman equation $$X=T^\pi X.$$
\end{theorem}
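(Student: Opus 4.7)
The plan is to decompose the theorem into its two assertions: a syntactic convergence statement inside $\B^\mu$, and a semantic statement about the unique fixed point of $T^\pi$ in $\V$. Both parts should essentially be corollaries of machinery already built: Lemma~\ref{Bellman-pattern}, Corollary~\ref{mu-limit}, Theorem~\ref{Bellman-op}, Theorem~\ref{fixed-point}, and the fact that $\V$ is a $1$-bounded complete metric space.

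For the first assertion, I would start by invoking Lemma~\ref{Bellman-pattern}, which gives $O^\pi:1:\{\ang{\gamma}\}\in\hat\Sigma^\mu$. Since $\gamma<1$, this Banach pattern is $1$-contractive, so $O^\pi$ is a legitimate target for the fixed-point operator and the hypotheses of Corollary~\ref{mu-limit} are met (with $n=1$, $i=1$, and $\ol t$ the single variable). Instantiating that corollary directly yields that the sequence $([O^\pi]^k(s))_{k\geq 1}$ is convergent in $\B^\mu$ and $\vdash \lim_k [O^\pi]^k(s)=_0 \mu x.O^\pi x\in\B^\mu$, which unwinds to the claimed $\forall\e>0~\exists n~\forall m,\ \vdash [O^\pi]^{m+n}(s)=_\e \mu x.O^\pi x$.

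For the second assertion, I would first show that $\brck{\mu x.O^\pi x}$ is a solution of the Bellman equation. By Theorem~\ref{fixed-point}(1) we have $\vdash \mu x.O^\pi x =_0 O^\pi(\mu x.O^\pi x)\in\B^\mu$, and since $\V\models\B$ extends (by Theorem~\ref{t1}) to $\V\models\B^\mu$, this transports to the semantic equality $\brck{\mu x.O^\pi x}=\brck{O^\pi(\mu x.O^\pi x)}$ in $\V$. Applying Theorem~\ref{Bellman-op} to the right-hand side gives $\brck{O^\pi(\mu x.O^\pi x)}=T^\pi \brck{\mu x.O^\pi x}$, so $\brck{\mu x.O^\pi x}$ indeed satisfies $X=T^\pi X$.

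For uniqueness, I would appeal to the classical Banach fixed-point theorem: $(\V,d)$ is a $1$-bounded complete metric space, and $T^\pi$ is $\gamma$-contractive on $\V$ (which follows either from a direct calculation using the definition of $T^\pi$, or from Lemma~\ref{Bellman-pattern} transported to $\V$ via the interpretation and Theorem~\ref{Bellman-op}). A contractive self-map on a complete metric space has at most one fixed point, so $\brck{\mu x.O^\pi x}$ is the unique solution. The only mildly delicate step is being careful that the semantics of $\mu x.O^\pi x$ in $\V$ really coincides with $[O^{\pi,\V}]^*_1$ — but this is exactly the definition of how $\Om^\mu X$ is interpreted in models from $\CM(\Omega)$ given before Theorem~\ref{t1}, so there is nothing substantial to prove there. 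Overall I expect no real obstacle: the work has been front-loaded into the abstract development, and this theorem is a clean application.
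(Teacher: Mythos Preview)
Your proposal is correct and follows essentially the same approach as the paper, which simply states that the result is a direct consequence of Corollary~\ref{mu-limit} and Theorem~\ref{Bellman-op}. You have merely made explicit the supporting ingredients (Lemma~\ref{Bellman-pattern}, Theorem~\ref{fixed-point}, Theorem~\ref{t1}, and the classical Banach fixed-point theorem for uniqueness) that the paper leaves implicit in its one-line proof.
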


Note that the fixed-point Banach theory $\B^\mu$ gives us not only
the solution to Bellman equation, but the apparatus for controlling "the speed" of
convergence of the iteration sequence to the solution of Bellman equation. In
this way, we can build an approximation theory directly inside $\B^\mu$.


\section{Conclusions and related work}
We have developed a quantitative fixed point theory extending the quantitative
equational logic of~\cite{Mardare16} by introducing fixed point operators and
appropriate axioms.  The key ingredients needed were the Banach patterns
that capture the contractiveness of functions in their different arguments.  We
were able to mimic, in this setting, the standard iteration theories as described
in~\cite{Bloom93} and \cite{Simpson00}.  We also developed an extended example
showing that the notion of Bellman equations, which are the centrepiece of
reinforcement learning, can be described in our framework.

A very general and interesting categorical treatment of iteration comes from the
theory of traced monoidal categories~\cite{Joyal96a}.  Recent work by Goncharov
and Schr\"oder~\cite{Goncharov18} develops the notion of guarded traced
categories which, like our Banach patterns, controls when traces can be taken.
The monumental treatise of Bloom and Esik~\cite{Bloom93} also gives a very
general treatment of iteration and mentions fixed points in metric spaces as an
example.  However, these theories are all in the traditional setting of
equational logic and do not have the quantitative notions that we have here with
approximate equality.  Thus, for example, we can discuss the geometric rate of
convergence in value iteration.

A very interesting formulation of the coinduction principle due to
Dexter Kozen \cite{Kozen06,Kozen07} is closely related to our rule for reasoning
about fixed points.  It is equivalent in power to our fixed-point approximation
axiom, as we have argued.  However his rule is very flexible and probably more
convenient to use in various situations.  It would certainly make an interesting
variation to our formulation.  We did consider both alternatives when we were
developing our framework and at the moment we do not see a compelling reason to
choose one over the other.  This is definitely a topic which should be explored
further.

While the fixed-point theory in this paper is infinitary, it would be interesting, as well as potentially useful, to develop a finitary version of it, and in this context, the Kozen principle of coinduction may be more powerful.

We have developed an example showing that some nontrivial situations can be
modelled and reasoned about in our framework.  Of course, whatever we have shown
about Bellman equations has been long known, but it does show the potential power
of the framework.  In recent work Amortila et al.~\cite{Amortila20} have proven
convergence, using coupling techniques, of a variety of more recent
reinforcement learning algorithms.  It would be fascinating to see if the
present framework could help to organize and reason about situations where the
convergence has not yet been established.



%

\bibliography{main}
\end{document}